\newcommand{\nonnegreal}{{\mathbb R}_+}
\newcommand{\posreal}{{\mathbb R}_{++}}
\newcommand{\rulename}{\operatorname}
\newcommand{\problems}{\mathcal{C}}
\newcommand{\aru}{\@ifstar\@saru\@aru}
\newcommand{\@saru}[1]{\mathop{\left(\exists #1\right)}}
\newcommand{\@aru}[1]{\mathop{\exists #1}}
\newcommand{\zenbu}[1]{\mathop{\forall #1}}
\newcommand{\inv}{{-1}}
\newtheorem{thm}{Theorem}[section]
\newtheorem{cor}[thm]{Corollary}
\newtheorem{lem}[thm]{Lemma}
\newtheorem{prop}[thm]{Proposition}
\newtheorem*{prop*}{Proposition}
\newtheorem*{fact*}{Fact}
\newtheorem*{claim*}{Claim}
\newtheorem{fact}[thm]{Fact}
\theoremstyle{definition}
\newtheorem{defi}[thm]{Definition}
\theoremstyle{remark}
\newtheorem*{rem*}{Remark}
\journal{European Journal of Operational Research}
\begin{document}

\begin{frontmatter}



\title{A Certain Notion of Strategy Freedom \mbox{under Retail Competition} in Claims Problems}


\author[label1]{Kentar\^{o} Yamamoto}
\ead{ykentaro@g.ecc.u-tokyo.ac.jp}
\affiliation[label1]{organization={Research Center for Advanced Science and Technology, \mbox{The University of Tokyo}},
            addressline={4--6--1~Komaba}, 
            city={Meguro},
            postcode={153--8904}, 
            state={Tokyo},
            country={Japan}}

\begin{abstract}
  A new axiom for rules for claims problems is introduced.
  It strengthens a condition studied in supply chain literature,
  which forces rules to
  disincentivize order inflation under capacity allocation and retail competition.
  The relevance of the axiom is further demonstrated by
  one of the main results of the present article:
  it
  characterizes the weighted constrained equal awards rule
  together with known natural axioms.
\end{abstract}



\begin{keyword}
  Game theory\sep Claims problem\sep Weighted rule\sep  Axiomatic characterization
  \sep Supply chain management

\MSC[2020] 90B06 \sep 91B32
\end{keyword}

\end{frontmatter}


\section{Introduction}
Formal studies on claims problems, also known as bankruptcy problems,
and rules applicable to them
are motivated by needs to distribute some resource
claimed by several agents
the sum of whose claims adds up to more than the said resource.
The vast body of research that began with \cite{ONEILL1982345} has been summarized in the
monograph by \cite{Thomson2019}.

An object of study appearing in the literature that is important for our purpose
is the
weighted constrained equal awards rule.
This was introduced by \cite{lee94:_simpl_gener_const_equal_award_rule_charac}
and characterized in the broader context of weighted claims problems
by \cite{https://doi.org/10.1111/1468-0262.00126}.

The present article introduces a new axiom on weighted rules for claims problems.
The axiom is inspired by practical issues arising in the supply chain literature.
\cite{Liu2012EquilibriumAO}
considers a situation
where a single supplier receives orders from many retailers
who engage in some Cournot competition of complete information.
When the sum of the retailers' orders exceeds his stock,
the supplier applies a rule to the associated claims problem
whose claimants are the retailers and whose claims are their orders.
\cite{lee94:_simpl_gener_const_equal_award_rule_charac} pointed out that
depending on the rule chosen,
a retailer may strategically inflate her order, that is,
making an order larger than one expected solely from the Nash equilibrium
of the said Cournot competition.
\cite{Cho2014} discussed how to choose a rule that eliminate such strategic
inflation of orders.

While \cite{Liu2012EquilibriumAO}
and \cite{Cho2014}
assumed that the retailers engage in competitions of a certain form
and studied elimination of inflation of orders
under this assumption,
we instead ask for rules eliminating inflation of orders
\emph{regardless} of what competition the retailers are in.
This leads to a new axiom on weighted rules on claims problems,
which we  call strategy-freedom in the present article.
Extracting this axiom from the supply chain literature
allows us to apply axiomatic methods to the supply chain issues.
We reveal what rules, if any, satisfy strategy-freedom;
the relationship between strategy-freedom and other axioms on rules;
and the characterization of weighted constrained equal awards rule
in terms of strategy-freedom.

The article is organized as follows.
In \S~\ref{sec:preliminaries},
we review standard concepts in claims problem literature
while fixing the terminology and the notation.
We dedicate \S~\ref{sec:new-axioms}
to the derivation of our new axiom, strategy-freedom,
from the works by \cite{Liu2012EquilibriumAO} and \cite{Cho2014}.
We review their model and detail the relationship between their motivation
and our axiom.
We also note another axiom they implicitly studied,
whose special case can be found in the existing claims problem literature.
In \S~\ref{sec:main-results},
we engage in axiomatic study of rules in terms of the newly-found axioms.
We first study the relationship between the two axioms inspired by
\cite{Cho2014} and clarify certain points made by them.
We then examine the details of the axiom of strategy-freedom
and reveal its interesting behavior across different weights.
In light of this, we devise
an alternate version of strategy-freedom, continuous strategy-freedom,
and characterize it in terms of the original strategy-freedom and continuity.
Lastly, we prove the characterization of the weighted constrained equal awards
rule in terms of strategy-freedom and several other known axioms.
We conclude the article by stating \emph{inter alia} an open problem
in \S~\ref{sec:concluding-remarks}.
The \ref{sec:rule-strategy-freew} has more technical results on the fine details of
strategy-freedom.

\section{Preliminaries}\label{sec:preliminaries}
Our framework will involve weighted versions of familiar rules and axioms;
however, weights will not be part of specifications of problems,
and instead we shall discuss families of rules and families of axioms,
both indexed by weights.
Therefore, a \emph{(claims) problem} is any vector
$(E, c) \in \nonnegreal \times \nonnegreal^{N}$,
where $N$ is the set of \emph{claimants}.
A problem~$(E, c)$ is \emph{wellformed} if $\sum c \ge E$,
where for a vector $v \in \nonnegreal^N$, we let $\sum v = \sum_{i \in N} v_i$.%
\footnote{%
  In the literature, it is customary to define problems to be what we call wellformed problems.
  We instead allow any vector of the correct index set
  with nonnegative components to be a problem
  and demand (\ref{eq:convenience}) of rules.
  This is so that the description of the game $G^{z, E}(\pi)$ may be simpler
  in \S~\ref{sec:motivating-model};
  it does not alter the theory beyond the superficial difference.}
We write $\problems^N$ for the set of \emph{wellformed} problems.
A \emph{weight} is  any member of the $(|N|-1)$-dimensional unit 
simplex $\Delta := \Delta^{|N|-1}$ in $\nonnegreal^N$
(our convention is that this refers to what some call the \emph{interior} of such a simplex).
A \emph{rule} is a function $z^w$ mapping a problem
to a vector in $\nonnegreal^N$ satisfying
\begin{equation}\label{eq:balance}
  \sum c > E \implies \sum z^w(E, c) = E,
\end{equation}
\begin{equation}
  \label{eq:convenience}
  \sum c \le E \implies z^w(E, c) = c,
\end{equation}
and
\begin{equation}
z^w(E, c) \leqq c.\label{eq:not-exceeding-claims}
\end{equation}
Finally, a \emph{weighted rule} is a family $(z^w)_{w \in \Delta}$
of rules indexed by weights,
and a \emph{weighted axiom} or a \emph{weighted property}
is a family of axioms indexed by weights.

Let us recall known rules and axioms that are relevant in our work.
Our terminology is based on \cite{Thomson2019}.
The following weighted rule
was introduced essentially by \cite{Kalai1977} and studied
by \cite{lee94:_simpl_gener_const_equal_award_rule_charac}.
\begin{defi}
  The
  \emph{constrained equal awards rule with a weight} $w$, $\rulename{CEA}^w$,
  is defined by
  \[
    \rulename{CEA}^w(E, c) = (c_i \wedge \lambda w_i)_{i \in N}
  \]
  where $\lambda$ is chosen for (\ref{eq:balance}) to hold.
\end{defi}

The following axioms are standard.
\begin{defi}
  \begin{enumerate}
  \item A rule $z$ is \emph{homogeneous} if for every $(E, c) \in \problems^N$
    and $\lambda > 0$, we have $z(\lambda E, \lambda c) = \lambda z(E, c)$.
  \item A rule $z$ is \emph{claims monotonic}
    if $z_i(E, (c_i', c_{-i})) \ge z_i(E, c)$
    for every $(E, c) \in \problems^N$, $i \in N$,
    and $c'_i > c_i$.
  \item A rule $z$ is \emph{anonymous}
    if
    $z_{\sigma(i)}(E, c) = z_i(E, c^\sigma)$,
    where $c^\sigma = (c_{\sigma(i)})_{i \in N}$,
    for every $i \in N$, $(E, c) \in \problems^N$,
    and every permutation $\sigma$ of $N$, i.e., every bijection on $N$.
  \item A rule $z$ is \emph{uniformly claims continuous}
    if $z(E, \cdot)$ is uniformly continuous as a function
    $\nonnegreal^N \to \nonnegreal^N$ for each $E > 0$.
  \end{enumerate}
\end{defi}

The following class of rules, which is broad enough to contain both the proportional rule
and $\rulename{CEA}^w$, was introduced and axiomatized by \mbox{\cite{Harless2016}}.
\begin{defi}
  \emph{Separable directional rules} or, as originally called, \emph{weighted proportional rules} are rules of the form $P^u$,
  where $u = (u_i)_{i \in N}$ is a family of functions
  $u_i \colon \nonnegreal \to \nonnegreal$
  that are strictly positive on $\posreal$ with $u_i(0) = 0$,
  and
  \[
    P^u(E, c) = (c_i \wedge \lambda u_i(c_i))_{i \in N},
  \]
  where $\lambda$ is chosen to satisfy (\ref{eq:balance}),
  for every $(E, c) \in \problems^N$.
\end{defi}


\section{New axioms}\label{sec:new-axioms}
\subsection{Motivating model}\label{sec:motivating-model}
We review the model studied by \cite{Liu2012EquilibriumAO} and \cite{Cho2014},
albeit in an abstract fashion.
The model involves a supplier of homogeneous goods
and a number of retailers.
We denote the set of retailers also by $N$,
as they will play the role of claimants in the body of the present article.
The supplier has $E$ units of the goods.
Each of the retailers $i$ puts an order to the retailer,
requesting $c_i$ units of goods ($i \in N$).
The retailer has a function $z^w$ that maps a vector $(E, c)$,
where $c = (c_i)_{i \in N}$, to a vector $z^w(E, c) \in \nonnegreal ^N$,
where $z_i^w(E, c)$ is the amount of goods that the retailer~$i$ receives
from the supplier.
This function is common knowledge among the retailers and the supplier
prior to the orders.
Let $c' = z^w(E, c)$.
We posit that the retailer~$i$ obtains the payoff $\pi_i(c')$.
In the works of \cite{Liu2012EquilibriumAO} and \cite{Cho2014}, $\pi := (\pi_i)_{i \in N}$ is the payoff functions of
some Cournot competition with a linear inverse demand function.
Now, for a moment, let us forget that the supplier had a limited supply of $E$
and assume $z^w(E, c) = c$.
Then the retailers would play the complete-information game~$G_0(\pi)$ of normal form determined by the payoff function $\pi$,
which  we assume, for the sake of simplicity, has a unique Nash equilibrium $c^* \in \nonnegreal^N$.
On the other hand, in view of the limited supply,
it is plausible to demand that $z^w$ be a rule
as defined in \S~\ref{sec:preliminaries}.
Thus, in general, $z^w(E, c) \neq c$,
and the retailers instead play the game $G^{z^w, E}(\pi)$ determined by $(\pi_i \circ z^{w, E})_{i\in N}$,
where $z^{w, E}(c) = z^w(E, c)$.
The Nash equilibrium $c^*$ of the original game $G_0(\pi)$ may cease to be
a Nash equilibrium of $G^{z^w, E}(\pi)$.
Seeing this as undesirable,
\cite{Liu2012EquilibriumAO} and \cite{Cho2014} studied conditions on $z^w$ for $c^*$ to remain a Nash equilibrium of
$G^{z^w, E}(\pi)$
(again, where $\pi$ is determined by Cournot competitions with linear demands,
unlike our approach).

\subsection{New axioms}
The aforementioned model motivates the following axioms on rules
for claims problems.
We should like our axiom to state
\begin{quote}
the Nash equilibrium $c^*(\pi)$ of $G_0(\pi)$ is still a Nash equilibrium of $G^{z^w, E}(\pi)$.
\end{quote}
Let us call this condition $P(\pi, E, z^w)$.
In order to turn this into an axiom on rules,
we need to specify $\pi$ and $E$.
\cite{Liu2012EquilibriumAO} and \cite{Cho2014} chooses specific $\pi$, namely, those coming from Cournot competitions.%
\footnote{\cite{Cho2014} actually establishes properties of \emph{pairs} of rules and $E$,
  not rules \emph{per se}.}
We, on the other hand, propose that our axiom demand
\begin{quote}
$P(\pi, E, z^w)$
for \emph{every} payoff function $\pi \colon \nonnegreal^N \to \nonnegreal^N$ with the unique Nash equilibrium $c^*$
and \emph{every} $E \ge \sum c^*$.
\end{quote}
We call this condition $Q(c^*, z^w)$.
Our proposal is based on the idea that our axiom should be a criterion
by which the supplier in the original model can choose a good rule.
Since it is more plausible that
the supplier is not cognizant of the details of the competition that
the retailers are in,
the supplier may want to choose a rule that do not alter, and hence obfuscate,
the Nash equilibrium of $G_0(\pi)$ \emph{no matter what $\pi$ is}
(we still assume that the supplier somehow knows the Nash equilibrium $c^*$).

As for our treatment of $E$ in the condition~$Q(\cdot, \cdot)$,
the demand that $E \ge \sum c^*$ is not present in \cite{Liu2012EquilibriumAO} or \cite{Cho2014}.%
\footnote{%
  A more precise statement would be that
  the question of whether or not to allow $E < \sum c^*$ does not arise
  for \cite{Cho2014},
  who in reality study pairs $(E, z^w)$.}
We chose to impose $E \ge \sum c^*$ because, otherwise,
$z^w$ could not possibly allocate $c^*$,
and whether or not $c^*$ remains a Nash equilibrium in $G^{z^w, E}(\pi)$
would depend on $\pi$.
The universal quantifier for $E$ is justified
by the same reasoning as was with payoff functions:
in absence of prior knowledge on $E$, the supplier may want to
demand of the rule that it not disrupt the Nash equilibrium no matter what $E$ is.
Our choice is also in line with the existing literature on
strategy-freedom, which is usually a universally quantified condition.

The universal quantifier for $\pi$ helps simplify $Q(\cdot, \cdot)$.
Suppose $E \ge \sum c^*$.
The vector $c^*$ is the Nash equilibrium of $\pi \circ z^{w, E}$
for all $\pi$ with the unique Nash equilibrium~$c^*$
if and only if
$z^w_{-i}(E, (c_i, c^*_{-i})) = c^*_{-i}$ for every $i \in N$
and every $c_i \ge E - \sum c^*_{-i}$
(note that this condition is just a paraphrase of
``$z^w_{-i}(E, (c_i, c^*_{-i})) = c^*_{-i}$ for every $i \in N$
and every $c_i$ with $(E, (c_i, c^*_{-i})) \in \problems^N$'').
To see the ``only if'' direction,
given $i \in N$ and $c_i$ with
$z^w_{-i}(c_i, c^*_{-i}) \neq c^*_{-i}$,
whence $z^w(c_i, c^*_{-i}) \neq c^*$,
choose $\pi_i$ so that $\pi_i(z^w(c_i, c^*_{-i})) < \pi_i(c^*)$.
To see the other direction,
note that if $z^w_{-i}(E, (c_i, c^*_{-i})) = c^*_{-i}$ for all $i$
and $c_i\ge E - \sum c^*_{-i}$,
then
\[
  (\pi \circ z^{w, E})(c_i, c^*_{-i}) =
  \begin{cases}
    \pi(c_i, c^*_{-i}), & (c^*_{-i} \ge E - \sum c^*_{-i})\\
    \pi(c^*), & (c^*_{-i} < E - \sum c^*_{-i})
  \end{cases}
\]
which does not exceed $\pi(c^*)$ by assumption.

Let us recall that, under our current convention,
(weighted) axioms are indexed by weights, not vectors of amount of goods.
Also relevant is our assumption that weights are members of $\Delta = \Delta^{|N| - 1}$.
Our axiom, therefore, should not depend on the original Nash equilibrium $c^*$ itself,
but $c^* / \sum c^*$.
Given a weight~$w$,
it would then demand of a rule $z$ that $Q(\lambda w, z)$
for \emph{every} $\lambda \in \nonnegreal$.
We have now arrived a new definition:
\begin{defi}\label{defi:sf}
  A rule $z$ is \emph{strategy-free}$^w$ for a weight $w$
  if for every $i \in N$, $\lambda \in \nonnegreal$,   $E \ge \sum \lambda w$,
  and $c_i \ge E - \sum \lambda w_{-i}$,
  \begin{equation}
    z_{-i}(E, (c_i, \lambda w_{-i})) = \lambda w_{-i}.\label{eq:defi}
  \end{equation}
\end{defi}
\begin{rem*}
  \begin{enumerate}
  \item As before, the last part of the definition is equivalent to saying
``for every $c_i$ with $(E, (c_i, \lambda w_{-i})) \in \problems^N$, equation~(\ref{eq:defi}) obtains.''
\item Expecting Lemma~\ref{lem:composite},
  let a binary relation $D \subseteq \problems^N \times \Delta$
  consist of tuples $((E, c), w) \in \problems^N \times \Delta$ such that
\[
  \addtocounter{equation}{1}
    \aru*{i =: i(c, w)\in N}\aru*{c'_i =: c'_i(c, w)\stackrel{(\arabic{equation})}{<} c_i} \left[(c'_i, c_{-i}) \parallel w, 
    E \ge \sum c_{-i} + c'_i\right]
\]
note the strict inequality (\arabic{equation}).
Then, a rule $z$ is strategy-free$^w$ if and only if
$z_{-i}(E, c) = c_{-i}$ for every $(E, c) \in \problems^N$
with $(c, w) \in D$.
\end{enumerate}
\end{rem*}

In addition to requiring (\ref{eq:defi}) of a rule~$z$,
one might want to demand that (\ref{eq:defi}) hold approximately
for close arguments of $z$.
Corollary~\ref{cor:null} of \S~\ref{sec:main-results}
motivates such an additional requirement.
There are several ways in which to formalize this idea.
Expecting Proposition~\ref{prop:cont},
we propose the following axiom. 
\begin{defi}\label{defi:csf}
  A rule~$z$ is \emph{continuously strategy-free$^w$}
  if 
  for arbitrary $\epsilon > 0$
  there exists $\delta > 0$
  such that for every $E > 0$,
  $c \in \nonnegreal^N$,
  $i \in N$, and
  $c'_i \in \nonnegreal$,
  the conditions
  $E \ge \sum \lambda w$, $(E, (c_i', c_{-i})) \in \problems^N$, and
  $\lVert \lambda w_{-i} - c_{-i} \rVert < \delta$
  imply
  $\lVert z_{-i}(E, (c_i', c_{-i})) - c_{-i} \rVert < \epsilon$.
\end{defi}
Informally, this condition requires that
$z_{-i}(E, (c_i', c_{-i}))$ is close to $c_{-i}$
whenever there is $\lambda$ such that $\lambda w_{-i}$ is close enough to $c_{-i}$.

\cite{Cho2014} inspire another family of axioms.  First, we need a definition:
\begin{defi}
  For a rule $z$, the vector $\alpha(z) \in \nonnegreal^N$ is defined by
  \[
    \addtocounter{equation}{1}
    \alpha_i(z) = \inf \{ \rho \ge 0 \mid \zenbu{(E, c) \in \problems^N} {z_i(E, c) \stackrel{(\arabic{equation})}{\ge} c_i \wedge \rho E} \}.
  \]
\end{defi}
Note that (\arabic{equation}) is an equality 
whenever $c_i =  c_i \wedge \rho E$.

For each weight $w$, ``$\alpha(\cdot) = w$'' is a property of a rule.
It is worth noting that $\sum \alpha(z)$ may be strictly less than $1$, i.e.,
$\alpha(z)$ may not be a weight
as they are usually understood.
\cite{Cho2014} call such rules \emph{individually responsive},
and other rules \emph{individually unresponsive}.
Also note that if $w = (1/|N|, \dots, 1/|N|)$,
then this rule is nothing but what \mbox{\cite{Moulin2002}} introduced and
\cite{Thomson2019} called \emph{min-of-claim-and-equal division lower bounds on awards}.
\section{Main results}\label{sec:main-results}
The theory for the case $|N| = 2$ is quite different from the rest.
The following two Facts are  pointed out by, e.g., \mbox{\cite{Moulin2002}}
for the unweighted case, e.g., $w_i = w_j$ for all $i, j \in N$.
\begin{fact}[folklore]
  If $|N| = 2$, then for each weight $w \in \Delta$,
  the axiom $\alpha(\cdot)= w$ characterizes $\rulename{CEA}^w$.
\end{fact}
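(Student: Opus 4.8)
The plan is to prove both implications, reducing at the outset to the balance regime. For any $(E,c)\in\problems^N$ with $\sum c = E$, (\ref{eq:convenience}) gives $z(E,c)=c=\rulename{CEA}^w(E,c)$, so it suffices to treat problems with $\sum c > E$, where (\ref{eq:balance}) yields $z_1(E,c)+z_2(E,c)=E$. Throughout write $N=\{1,2\}$ and use $w_1+w_2=1$. The workhorse is the reading of $\alpha$ emphasized in the remark following its definition: the equality $\alpha_i(z)=w_i$ amounts to saying that the guarantee
\[
  z_i(E,c)\ge c_i\wedge w_i E
\]
holds for every problem and cannot be strengthened to any coefficient exceeding $w_i$.

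First I would check that $\rulename{CEA}^w$ satisfies the axiom. For the guarantee $\rulename{CEA}^w_i(E,c)\ge c_i\wedge w_iE$, I split on whether claimant $i$ binds. If $c_i\le\lambda w_i$ then $\rulename{CEA}^w_i(E,c)=c_i\ge c_i\wedge w_iE$ trivially. If $c_i>\lambda w_i$ then in the two-claimant case balance forces $\lambda\ge E$: either the other claimant $j$ also fails to bind, giving $\lambda w_1+\lambda w_2=E$ and $\lambda=E$; or $j$ binds, in which case $c_j=E-\lambda w_i$ and $c_j\le\lambda w_j$ give $E\le\lambda(w_i+w_j)=\lambda$. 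Then $\rulename{CEA}^w_i(E,c)=\lambda w_i\ge w_iE\ge c_i\wedge w_iE$. Tightness is immediate, since for any problem with both $c_1,c_2$ large one has $\lambda=E$ and $\rulename{CEA}^w_i(E,c)=w_iE$, so no coefficient $\rho>w_i$ survives.

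Next, for uniqueness, suppose $\alpha(z)=w$ and fix $(E,c)$ with $\sum c>E$. The two guarantees $z_1\ge c_1\wedge w_1E$ and $z_2\ge c_2\wedge w_2E$, together with $z_1+z_2=E$ and $z_i\le c_i$ from (\ref{eq:not-exceeding-claims}), determine $z$. If $c_i>w_iE$ for both $i$, then summing the guarantees gives $E=z_1+z_2\ge w_1E+w_2E=E$, so both are equalities and $z_i=w_iE=\rulename{CEA}^w_i(E,c)$. If instead $c_1\le w_1E$ (the symmetric case being identical), then $z_1\ge c_1\wedge w_1E=c_1$ together with $z_1\le c_1$ forces $z_1=c_1$, hence $z_2=E-c_1$; this matches $\rulename{CEA}^w$, because $c_1\le w_1E$ makes claimant $1$ bind under $\rulename{CEA}^w$ as well. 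The case $c_i\le w_iE$ for both $i$ cannot arise, as it would give $\sum c\le E$.

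The genuinely two-claimant ingredient---and the reason the analogous statement fails for $|N|\ge3$---is the identity $w_1+w_2=1$: it makes the two lower guarantees sum to exactly $E$, leaving no slack and hence fixing the split. I expect the only delicate points to be the bookkeeping that $\lambda\ge E$ in the binding subcase above (which excludes the awkward regime $c_i>\lambda w_i$ with $c_i\le w_iE$) and the verification that $z_2=E-c_1$ agrees with $\rulename{CEA}^w_2(E,c)$; both are short once the case split is organized around comparing each $c_i$ with $w_iE$.
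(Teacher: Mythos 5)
Your proof is correct, but there is nothing in the paper to compare it against: the statement is recorded as a folklore Fact (with the unweighted case attributed to earlier literature) and no proof is given, so your write-up in effect supplies the missing argument, and it is the natural one. Two remarks. First, you have---correctly---not taken the paper's definition of $\alpha$ at face value: as written, the set $\{\rho \ge 0 : z_i(E,c) \ge c_i \wedge \rho E \text{ for all } (E,c) \in \problems^N\}$ is downward closed and contains $0$, so its \emph{infimum} is $0$ for every rule; the operative reading, which the paper itself relies on in the proof of Theorem~\ref{thm:sf-alpha}, is the \emph{supremum} of that set, i.e.\ the largest guarantee coefficient, and that is exactly the reading you adopt. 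Your uniqueness step also tacitly uses that the guarantee holds at the supremum itself; this is fine, since for fixed $(E,c)$ one may let $\rho$ increase to the supremum and pass to the limit in $c_i \wedge \rho E$, but it deserves a sentence. (The remaining unspelled-out step, that $\rulename{CEA}^w(E,c) = (c_1, E-c_1)$ when $c_1 \le w_1 E$ and $\sum c > E$, is indeed a one-line computation with $\lambda = (E-c_1)/w_2$.) Second, your closing diagnosis of where $|N|=2$ enters is slightly misplaced: the summation argument of your first uniqueness case ($c_i > w_i E$ for both $i$) works verbatim for any $|N|$, since $\sum w = 1$ always; what is genuinely two-claimant is the \emph{other} case, where pinning the small claimant at $z_1 = c_1$ leaves a single remaining award to be fixed by balance. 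With three or more claimants that case leaves slack among the unpinned claimants, and this is precisely the freedom exploited by the rule constructed in Proposition~\ref{prop:non-char}. Neither point affects the validity of your proof.
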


Therefore, we assume $|N| \ge 3$ for the entirety of this section.

\begin{fact}[folklore]
  Let $w$ be a weight.
  \begin{enumerate}
  \item $\alpha(\rulename{CEA}^w) = w$
  (hence $\rulename{CEA}^w$ is individually unresponsive).
\item 
  The axiom $\alpha(\cdot) = w$
  does \emph{not} characterize $\rulename{CEA}^w$.
\end{enumerate}
\end{fact}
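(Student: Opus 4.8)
The plan is to prove the two claims separately, writing $z = \operatorname{CEA}^w$ throughout and using that $w \in \Delta$ means $\sum_{j} w_j = 1$ with every $w_j > 0$. The whole of part~1 rests on a single observation about the scaling constant $\lambda = \lambda(E,c)$ appearing in $z_i = c_i \wedge \lambda w_i$: one always has $\lambda \ge E$. Indeed, when $\sum c > E$, balance~(\ref{eq:balance}) gives
\[
  E = \sum_{j \in N} (c_j \wedge \lambda w_j) \le \sum_{j \in N} \lambda w_j = \lambda .
\]
From $\lambda \ge E$ I immediately obtain the uniform lower guarantee $z_i(E,c) = c_i \wedge \lambda w_i \ge c_i \wedge w_i E$ at every problem and every $i$ (the case $\sum c \le E$ being handled by~(\ref{eq:convenience})). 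Thus the defining inequality of $\alpha_i$ holds at $\rho = w_i$.

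For the opposite side I use a single witness, the ``uniform large claims'' problem $c_j = M$ for all $j$ with $M$ large and $E$ fixed: for $M$ large every claimant is uncapped, so $\lambda = E$ and $z_i = w_i E$ while $c_i = M > w_i E$. Hence for every $\rho > w_i$ one has $z_i = w_i E < \rho E = c_i \wedge \rho E$, so the defining inequality fails for all $\rho > w_i$. These two facts pin the threshold in the definition of $\alpha_i$ at $w_i$, giving $\alpha(z) = w$; and since $\sum \alpha(z) = \sum w = 1$, the rule is \emph{individually unresponsive}.

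For part~2 I exhibit a rule distinct from $\operatorname{CEA}^w$ with the same $\alpha$, and this is exactly where $|N| \ge 3$ enters. The key point is that as soon as one claimant is \emph{capped}, the bound $\lambda \ge E$ becomes strict, so every uncapped claimant $i$ receives $\lambda w_i$ strictly above its floor $w_i E$, leaving slack. Concretely, take $E = 1$, let agent $1$ have a tiny claim $c_1 = w_1/2 < w_1$ (hence capped), and set $c_j = 1$ for all $j \ge 2$ (hence uncapped, with $\sum c > E$); then $\lambda > 1$ and $z_2, z_3 > w_2 E, w_3 E$. Define $z'$ to agree with $\operatorname{CEA}^w$ at every problem except this one, where I move a small mass $\epsilon > 0$ from agent $2$ to agent $3$. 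For $\epsilon$ small this keeps $z'_2 \ge w_2 E$ and $z'_3 \le c_3$, so $z'$ still satisfies (\ref{eq:balance})--(\ref{eq:not-exceeding-claims}) and retains $z'_i \ge c_i \wedge w_i E$ everywhere, so the defining inequality still holds at $\rho = w_i$ for every $i$; the tightness half is untouched because the uniform-large-claims witness above is left unchanged. Hence $\alpha(z') = w$ while $z' \ne \operatorname{CEA}^w$.

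I expect the main obstacle to be the verification in part~2 that the modification preserves $\alpha(\cdot) = w$ \emph{exactly}: one must check that the redistributed allocation still lies above every floor $c_i \wedge w_i E$ (keeping the universal lower bound intact) while some tightness witness remains untouched. This is precisely what is impossible for $|N| = 2$, where a single capped agent leaves only one uncapped agent, with no second uncapped agent to absorb the redistributed mass without breaching its claims cap; no slack exists and the above-floor allocation is forced, consistent with the $|N| = 2$ Fact that $\alpha(\cdot) = w$ does characterize $\operatorname{CEA}^w$. If a more robust counterexample is preferred, the same shift can be performed on an open region of problems rather than at a single point, but the one-point modification already refutes characterization.
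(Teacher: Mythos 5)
Your proof is correct, but note that the paper offers no proof of this Fact at all---it is stated as folklore---so the right comparison is with Proposition~\ref{prop:non-char}, which strengthens part~2. For part~1, your two observations (balance forces $\lambda \ge E$, hence $z_i \ge c_i \wedge w_i E$ at every problem; the uniform-large-claims problem defeats every $\rho > w_i$) exactly identify the set of admissible $\rho$ as $[0, w_i]$. One caveat: as literally written, the paper's $\alpha_i(z)$ is the \emph{infimum} of that set, which is $0$ for every rule, since the set is downward closed and contains $0$; your phrase ``pin the threshold'' silently adopts the supremum (maximal-guarantee) reading, which is the only reading under which this Fact, and the uses of $\alpha$ in Theorem~\ref{thm:sf-alpha}, make sense. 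That is a defect of the paper's definition rather than of your argument, but you should state explicitly which reading you use. For part~2, your one-point perturbation is valid: when some claimant is capped, the inequality $\lambda \ge E$ becomes strict, so each uncapped claimant sits strictly above the floor $w_i E$ and strictly below the claim, leaving room to shift a small $\epsilon$; the tightness witnesses are untouched, so $\alpha(z') = w$ while $z' \neq \rulename{CEA}^w$, and your account of why this requires $|N| \ge 3$ matches the earlier folklore Fact for $|N| = 2$. The genuine difference from the paper is that Proposition~\ref{prop:non-char} proves the stronger statement that even \emph{claims monotonicity} together with $\alpha(\cdot) = w$ fails to characterize $\rulename{CEA}^w$, and for that it needs a global construction (whenever exactly one claimant is below $E/3$, redistribute toward the largest remaining claimant). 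Your rule $z'$ could not serve there: raising claimant~$2$'s claim from $1 - \delta$ to $1$ moves into the modified problem and \emph{lowers} her award from $\lambda w_2$ to $\lambda w_2 - \epsilon$, violating claims monotonicity. So your construction buys minimality and trivial verification; the paper's buys compatibility with the extra axiom.
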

This Fact be strengthened to the following,
which may well be known but whose proof the author was unable to find.
\begin{prop}\label{prop:non-char}
  Claims monotonicity and $\alpha(\cdot) = w$
  does not characterize $\rulename{CEA}^w$.
\end{prop}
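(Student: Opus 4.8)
Since $\rulename{CEA}^w$ itself satisfies both axioms---$\alpha(\rulename{CEA}^w)=w$ by the Fact above, and its claims monotonicity is standard---to refute characterization it suffices to exhibit a \emph{second} rule satisfying both. The guiding observation is that $\alpha$ is governed entirely by the floor $c_i\wedge w_iE$: this is simultaneously the lower bound a rule with $\alpha=w$ must respect and, when attained, the witness to tightness. The plan is therefore to keep this floor intact but split the leftover by a recipe different from the one implicit in $\rulename{CEA}^w$, so that the resulting rule is forced to share $\alpha$ with $\rulename{CEA}^w$ while differing from it on some problem.

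Concretely, for a problem with $\sum c>E$ I would set $g_i=c_i\wedge w_iE$, let $S=\{i\in N:c_i>w_iE\}$ be the unsatisfied claimants, and write $\rho_i=c_i-w_iE>0$ for $i\in S$ for the room above the floor. Two feasibility facts, both immediate from $\sum_i w_i=1$, should be recorded first: the residual $R:=E-\sum_i g_i$ is nonnegative, and $R\le\sum_{i\in S}\rho_i$ with slack exactly $\sum c-E$. The alternative rule $z$ then awards $z_i=c_i$ for $i\notin S$ and
\[
  z_i=w_iE+R\,\frac{\rho_i}{\sum_{j\in S}\rho_j}\qquad(i\in S),
\]
splitting the residual in proportion to remaining room rather than to $w$. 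Since $S\neq\emptyset$ whenever $\sum c>E$, the denominator is positive; the feasibility facts then give at once that $z$ is a rule, as $\sum z=E$ and $0\le z_i-g_i\le\rho_i$ forces $g_i\le z_i\le c_i$.

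Two verifications remain. For $\alpha(z)=w$ I would argue that $z$ and $\rulename{CEA}^w$ satisfy $z_i(E,c)\ge c_i\wedge\rho E$ for \emph{exactly} the same set of $\rho$, so their $\alpha$-values coincide: both respect the floor (here $z_i\ge g_i$ since the added term is nonnegative), giving the bound for every $\rho\le w_i$; and on the problems in which every claim exceeds $w_iE$ one has $R=0$, hence $z_i=w_iE$, exactly as for $\rulename{CEA}^w$, so the bound fails for every $\rho>w_i$ for both rules. For claims monotonicity, fixing $c_{-i}$ and increasing $c_i$, the cases $c_i,c_i'\le w_iE$ and $c_i\le w_iE<c_i'$ are trivial because $z_i=c_i$ on the lower side; the one delicate case is $c_i,c_i'>w_iE$, where $R$ and $A:=\sum_{j\in S\setminus\{i\}}\rho_j$ do not depend on $c_i$ and $z_i=w_iE+R\,\rho_i/(\rho_i+A)$ is visibly nondecreasing in $\rho_i$, hence in $c_i$.

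Finally I would separate $z$ from $\rulename{CEA}^w$ by a single instance with \emph{unequal} rooms---this is where $|N|\ge3$ enters, since for $|N|=2$ the construction collapses onto $\rulename{CEA}^w$. For $N=\{1,2,3\}$, $w=(1/3,1/3,1/3)$, $E=3$, and $c=(3,2,1/2)$ one computes $z=(4/3,7/6,1/2)$ whereas $\rulename{CEA}^w(E,c)=(5/4,5/4,1/2)$, so $z\neq\rulename{CEA}^w$, which completes the argument. I expect the only step requiring genuine care to be the monotonicity check in the case $c_i>w_iE$: one must confirm that enlarging a claim already above its floor never diverts residual away from that claimant, and the room-proportional split is chosen precisely so that this holds transparently.
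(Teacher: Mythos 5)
Your proof is correct, and it takes a genuinely different route from the paper's. The paper works, for simplicity, with $N=\{0,1,2\}$ and $w=(1/3,1/3,1/3)$, and defines its competing rule by cases on $I(E,c)=\{i\in N \mid c_i<E/3\}$: when $I(E,c)$ is not a singleton the axiom $\alpha(\cdot)=w$ already pins down the awards, and when $I(E,c)=\{i\}$ the rule departs from $\rulename{CEA}^w$ by giving the largest of the two remaining claimants the amount $(E-c_i)\wedge c_j$ and the leftover to the third. You instead keep every claimant's floor $c_i\wedge w_iE$ and divide the residual in proportion to the excess rooms $\rho_i=c_i-w_iE$, and you do so for arbitrary $|N|\ge 3$ and arbitrary $w$. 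Your approach buys generality and, more importantly, robustness: the room-proportional split makes $\alpha(\cdot)=w$ and claims monotonicity essentially one-line checks, whereas the paper's greedy allocation, as literally written, fails both axioms --- it can starve the third claimant below the floor (with $E=3$ and $c=(1/2,13/5,1)$ it gives awards $(1/2,5/2,0)$, although the floor of claimant $2$ is $1$), and it is not claims monotonic across ties in the selection of the largest claimant (with $E=3$ and $c=(1/2,11/10,1)$, raising $c_2$ from $1$ to $6/5$ lowers claimant $2$'s award from $7/5$ to $6/5$). The one point you should state explicitly is the boundary case $\sum c=E$ of claims monotonicity, where the definition of a rule forces $z(E,c)=c$ rather than your formula; this is harmless precisely because of the slack identity $\sum_{j\in S}\rho_j-R=\sum c-E$ that you recorded, which shows your formula returns exactly $c_i$ on that boundary, so the piecewise-defined award remains nondecreasing in $c_i$ across the transition.
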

\begin{proof}
  For simplicity, an argument for the case $N = \{0, 1, 2\}$ and
  $w = (1/3, 1/3, 1/3)$ will be presented.
  We construct the desired rule $z$ by the following.
  Let $(E, c) \in \problems^N$ be arbitrary.
  If $I(E, c) := \{i \in N \mid c_i < E/3\}$ is not a singleton, 
  then $\alpha(z) = w$ indeed determines $z(E, c)$.
  Otherwise, let $\{i\} = I(E, c)$.
  We still are forced to set $z_i(E, c) = c_i$.
  Let $j \in N$ be the least number
  such that $c_j = \max_{k \in N \setminus \{i\}} c_k$.
  Let $z_j(E, c) = (E - c_i) \wedge c_j$;
  the other claimant's award is determined by (\ref{eq:balance}).
\end{proof}

In light of the Proposition,
Theorem~2 of \cite{Cho2014} should be read as
merely establishing the equivalence between 
a special case of strategy-freedom$^w$ 
and $\alpha(\cdot) = w$
under the assumption of claims monotonicity\footnote{%
  Note that in \cite{Cho2014}, claims monotonicity is part of the definition of rules.},
as those conditions do not determine a rule uniquely.
In any case,
the following generalizes their Theorem~2.
\begin{thm}\label{thm:sf-alpha}
  Let $z$ be an individually unresponsive rule, i.e., $\sum \alpha(z) = 1$.
  It is strategy-free$^w$ for a weight $w$ if and only if
  $\alpha(z) = w$.
\end{thm}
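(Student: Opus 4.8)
The plan is to prove the two implications separately, using throughout that $\sum w = 1$ (so that $\sum \lambda w = \lambda$ and $\sum \lambda w_{-i} = \lambda(1-w_i)$), that the definition of $\alpha$ furnishes the pointwise guarantee $z_j(E,c) \ge c_j \wedge \alpha_j(z)\,E$ for every problem $(E,c) \in \problems^N$, and the basic rule properties \eqref{eq:balance}, \eqref{eq:convenience}, and \eqref{eq:not-exceeding-claims}. Individual unresponsiveness, $\sum \alpha(z) = 1$, will be invoked only in the second implication.

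\textbf{From $\alpha(z) = w$ to strategy-freedom.} First I would fix the data $i \in N$, $\lambda \in \nonnegreal$, $E \ge \lambda$, and $c_i \ge E - \lambda(1-w_i)$ of Definition~\ref{defi:sf} and examine the problem $(E, (c_i, \lambda w_{-i}))$. The key observation is that for every $j \neq i$ the $j$-th claim equals $\lambda w_j$, which is at most $w_j E = \alpha_j(z)\,E$ because $E \ge \lambda$; this is precisely the regime in which the $\alpha$-inequality is an equality, as noted after the definition of $\alpha$. Hence the guarantee gives $z_j(E,(c_i,\lambda w_{-i})) \ge \lambda w_j \wedge w_j E = \lambda w_j = c_j$, while \eqref{eq:not-exceeding-claims} gives the reverse inequality $z_j \le c_j$. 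Therefore $z_j = \lambda w_j$ for each $j \neq i$, which is exactly \eqref{eq:defi}. Note that this direction does not use $\sum \alpha(z) = 1$.

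\textbf{From strategy-freedom to $\alpha(z) = w$.} Here I would first establish the one-sided bound $\alpha_i(z) \le w_i$ for every $i$, and then close the gap using the hypothesis $\sum \alpha(z) = 1$. To bound $\alpha_i(z)$, I test strategy-freedom on the extreme problem $E = \lambda$ with $c_i$ large: taking $\lambda > 0$, $E = \lambda$, and any $c_i > \rho\lambda$ (which also satisfies $c_i \ge E - \lambda(1-w_i) = \lambda w_i$), the problem is wellformed with $\sum c > E$, so \eqref{eq:balance} applies. Strategy-freedom forces $z_{-i} = \lambda w_{-i}$, whence $\sum_{j\neq i} z_j = \lambda(1-w_i)$ and, by balance, $z_i = \lambda - \lambda(1-w_i) = \lambda w_i$. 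For any $\rho > w_i$ this yields $z_i = \lambda w_i < \rho\lambda = \rho E$ while $c_i \wedge \rho E = \rho E$, so the $\alpha$-guarantee at level $\rho$ fails; hence no $\rho > w_i$ qualifies in the defining set and $\alpha_i(z) \le w_i$. Finally, summing over $i$ and comparing with $\sum \alpha(z) = 1 = \sum w$ shows that every inequality $\alpha_i(z) \le w_i$ must in fact be an equality, i.e.\ $\alpha(z) = w$.

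I expect the main obstacle to be the second implication, specifically the choice of a test problem that extracts an \emph{upper} bound on $\alpha_i$ from a definition of strategy-freedom which, on its face, only constrains the awards of the \emph{other} claimants. Setting $E = \lambda$ collapses the slack $E - \lambda$ to zero and pins $z_i$ down to exactly $\lambda w_i$ via balance, which is what makes the guarantee at any $\rho > w_i$ demonstrably fail; once this one-sided bound is in hand, individual unresponsiveness finishes the argument in a single line. The first implication is comparatively routine, the only point meriting care being the observation that the competing claims $\lambda w_j$ lie below their caps $w_j E$, so that the guarantee there degenerates to equality and, together with \eqref{eq:not-exceeding-claims}, pins each $z_j$ to $\lambda w_j$.
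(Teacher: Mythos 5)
Your proof is correct and takes essentially the same approach as the paper's: the forward direction is the paper's argument verbatim (for $j \neq i$ the claim $\lambda w_j$ lies below the cap $\alpha_j(z)E = w_j E$ since $E \ge \lambda$, so the $\alpha$-guarantee together with (\ref{eq:not-exceeding-claims}) forces $z_j = \lambda w_j$), and your reverse direction uses the same test problem $E = \sum \lambda w$ with an inflated claim $c_i$, pinning awards down via (\ref{eq:balance}). The only difference is organizational: the paper proves the reverse direction by contraposition, invoking $\sum \alpha(z) = 1$ at the outset to locate an $i$ with $\alpha_i(z) > w_i$ and then exhibiting a violation of (\ref{eq:defi}), whereas you argue directly that strategy-freedom$^w$ alone yields the componentwise bound $\alpha_i(z) \le w_i$ and invoke individual unresponsiveness only in the final summation step.
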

\begin{proof}
  Suppose first $\alpha(z) = w$.
  Take an arbitrary $\lambda \in \nonnegreal$, $E \ge \sum \lambda w$, $i \in N$, and $c_i > E - \sum \lambda w_{-i}$.
  Then $\lambda w_{-i} \le \alpha_{-i}(z)E$,
  and, by the definition of $\alpha$,
  we have $z_{-i}(E, (c_i, \lambda w_{-i})) = \lambda w_{-i}$ as desired.

  On the other hand, if $\alpha(z) \neq w$,
  then since $\sum \alpha(z) = \sum w = 1$,
  there exists $i \in N$ such that $\alpha_i(z) > w_i$.
  Take a positive $\lambda$, and
  let $E = \sum \lambda w$ and $c_i = E\alpha_i(z)$, which is greater than $\lambda w_i$.
  Again by the definition of $\alpha$,
  we have $z_i(E, (c_i, \lambda w_{-i})) = c_i > \lambda w_i$.
  Since $\sum z(E, (c_i, \lambda w_{-i})) = E = \sum \lambda w$,
  it follows that $\sum z_{-i}(E, (c_i, \lambda w_{-i})) < \sum \lambda w_{-i}$,
  whence $z_{-i}(E, (c_i, \lambda w_{-i})) \neq \lambda w_{-i}$.
\end{proof}
\begin{cor}\label{cor:iu-exact}
  Let $w$ and $w'$ be weights.
  \begin{enumerate}
  \item If an individually unresponsive $z$ is 
     both strategy-free$^w$ and strategy-free$^{w'}$,
    then $w = w'$.
  \item If $\rulename{CEA}^w$ is strategy-free$^{w'}$, then $w = w'$.
  \end{enumerate}
\end{cor}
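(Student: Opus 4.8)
The plan is to obtain both statements directly from Theorem~\ref{thm:sf-alpha}, which pins down the vector $\alpha(z)$ for any individually unresponsive rule that happens to be strategy-free. Everything reduces to the fact that a fixed rule has a single $\alpha(z)$, so being strategy-free for two weights forces those weights to coincide.

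For the first part, I would apply Theorem~\ref{thm:sf-alpha} twice to the single rule $z$. Since $z$ is individually unresponsive and strategy-free$^w$, the theorem gives $\alpha(z) = w$; since the same $z$ is also strategy-free$^{w'}$, a second application gives $\alpha(z) = w'$. Hence $w = \alpha(z) = w'$, with no computation beyond invoking the theorem.

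For the second part, the only extra ingredient is that $\rulename{CEA}^w$ is itself individually unresponsive. This is recorded in the folklore Fact above, whose first clause asserts $\alpha(\rulename{CEA}^w) = w$; since $w \in \Delta$ we have $\sum \alpha(\rulename{CEA}^w) = \sum w = 1$, so $\rulename{CEA}^w$ meets the hypothesis of Theorem~\ref{thm:sf-alpha}. Being strategy-free$^{w'}$ then forces $\alpha(\rulename{CEA}^w) = w'$, and comparing with $\alpha(\rulename{CEA}^w) = w$ yields $w = w'$.

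Because the whole argument is a direct application of Theorem~\ref{thm:sf-alpha}, I do not anticipate any genuine obstacle; the content is purely the uniqueness of $\alpha(z)$ for a fixed rule. The one point I would make explicit is the normalization $\sum w = 1$ coming from membership in $\Delta$, which is exactly what lets the second part reduce to the individually unresponsive case handled by the theorem.
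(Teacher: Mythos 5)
Your proof is correct and matches the paper's intent exactly: the paper states this corollary without proof precisely because it follows from Theorem~\ref{thm:sf-alpha} in the way you describe, with the folklore Fact supplying $\alpha(\rulename{CEA}^w) = w$ (and hence individual unresponsiveness) for the second part. Nothing is missing.
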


We now move on to individually responsive rules.
\begin{prop}\label{prop:myrule}
  There exists an  individually responsive strategy-free$^w$ rule.
\end{prop}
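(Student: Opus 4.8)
The plan is to perturb $\rulename{CEA}^w$ at a single, carefully chosen problem. Recall that $\rulename{CEA}^w$ is individually unresponsive with $\alpha(\rulename{CEA}^w)=w$, and hence is strategy-free$^w$ by Theorem~\ref{thm:sf-alpha}. I would modify it only at one problem that the axiom of strategy-freedom$^w$ never ``sees'', arranging that one claimant's guaranteed floor drops strictly, so that $\sum\alpha$ falls below $1$.

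To set this up, I would fix a claimant $i_0\in N$ and pick a problem $(E^*,c^*)$ with $c^*\in\posreal^N$ whose ratios $c^*_k/w_k$ ($k\in N$) are pairwise distinct, taking $E^*:=\sum_{k\neq i_0}c^*_k$. The point of distinct ratios is that, for every $i\in N$, the $|N|-1\ge 2$ remaining ratios are still distinct, so $c^*_{-i}$ is not proportional to $w_{-i}$; consequently $(E^*,c^*)$ is never of the form $(E,(c_i,\lambda w_{-i}))$ quantified over in Definition~\ref{defi:sf}. I would then define $z$ to equal $\rulename{CEA}^w$ at every problem except that $z(E^*,c^*)$ is the vector obtained from $c^*$ by replacing its $i_0$-th coordinate with $0$ (so $i_0$ receives nothing and every other claimant is fully awarded). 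Since $\sum c^*=E^*+c^*_{i_0}>E^*$ while the proposed award sums to $E^*$ and is coordinatewise below $c^*$, this $z$ satisfies (\ref{eq:balance}), (\ref{eq:convenience}), and (\ref{eq:not-exceeding-claims}) and is a legitimate rule.

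The verification has two halves. For strategy-freedom$^w$, I would note that the condition in Definition~\ref{defi:sf} only ever constrains problems whose off-$i$ part is proportional to $w_{-i}$; the single modified problem is not of this kind, and at every problem that is, $z$ coincides with the strategy-free$^w$ rule $\rulename{CEA}^w$. For individual responsiveness, the modified problem has $z_{i_0}(E^*,c^*)=0$ with $c^*_{i_0}>0$ and $E^*>0$, so the bound $z_{i_0}\ge c^*_{i_0}\wedge\rho E^*$ is violated for every $\rho>0$, forcing $\alpha_{i_0}(z)=0$; and for each $j\neq i_0$, evaluating $z=\rulename{CEA}^w$ at a problem $(E,sw)$ with $s$ large gives $z_j=w_j E<\rho E$ for $\rho>w_j$, so $\alpha_j(z)\le w_j$. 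Summing, $\sum\alpha(z)\le 1-w_{i_0}<1$, the strictness coming from $w_{i_0}>0$ (as $w$ is interior to $\Delta$).

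The only real obstacle is engineering a single modified problem that is simultaneously invisible to the strategy-freedom constraint of \emph{every} claimant; this is precisely what the distinct-ratio choice secures, and it is here that $|N|\ge 3$ is used, since each $c^*_{-i}$ must have at least two coordinates for non-proportionality to be available. A subtler point I would make explicit is that driving $\alpha_{i_0}$ to $0$ does not covertly inflate the other floors to keep $\sum\alpha=1$: each $\alpha_j$ is a worst-case quantity already pinned at $w_j$ by the untouched proportional problems, so the total genuinely decreases to $1-w_{i_0}$.
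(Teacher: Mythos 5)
Your proof is correct, but it takes a genuinely different route from the paper. The paper exhibits an explicit rule (for $N=\{0,1,2\}$, $w=(1/3,1/3,1/3)$, extended by anonymity) via a case split on whether the two smaller claims coincide; you instead start from $\rulename{CEA}^w$, which is strategy-free$^w$ by Theorem~\ref{thm:sf-alpha}, and perturb it at a single problem $(E^*,c^*)$ chosen so that no $c^*_{-i}$ is proportional to $w_{-i}$ --- i.e., a problem outside the relation $D$ of the remark following Definition~\ref{defi:sf}, hence invisible to the axiom --- and you zero out one claimant's award there to force $\alpha_{i_0}(z)=0$ and thus $\sum\alpha(z)\le 1-w_{i_0}<1$. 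Your key observation (strategy-freedom$^w$ constrains $z$ only at problems whose off-$i$ part is a multiple of $w_{-i}$, and such problems can be avoided when $|N|\ge 3$ by taking pairwise distinct ratios $c^*_k/w_k$) is sound, the verification of (\ref{eq:balance})--(\ref{eq:not-exceeding-claims}) is complete, and the floor computations $\alpha_{i_0}(z)=0$, $\alpha_j(z)\le w_j$ are right (reading $\alpha$, as the paper itself does in the proof of Theorem~\ref{thm:sf-alpha}, as the largest guaranteed-floor coefficient). What each approach buys: yours works verbatim for arbitrary $|N|\ge 3$ and arbitrary interior $w$, and makes the mechanism transparent; the paper's rule, by contrast, is claims monotonic --- your perturbed rule is not, since lowering $c^*_{i_0}$ restores a positive $\rulename{CEA}^w$ award --- and that extra property is exactly what the paper's remark uses to contradict a claim at the end of \S~4 of \cite{Cho2014}, a purpose your construction could not serve.
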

\begin{proof}
  Consider the following rule $z$ for $N = \{0, 1, 2\}$ and $w = (1/3, 1/3, 1/3)$.
  Given $(E, c) \in \problems^N$ with $c_0 \ge c_1 \ge c_2$, define
  \[
    z(E, c) =
    \begin{cases}
       ((E - 2c_2) \vee E/3, c_2 \wedge E/3, c_2 \wedge E/3) & \text{if $c_1 = c_2$},\\
       (c_1 \wedge E/2, c_1 \wedge E/2, (E - 2c_1) \vee 0) & \text{otherwise}.
    \end{cases}
  \]
  and extend the definition using anonymity.
\end{proof}
\begin{rem*}
  \begin{enumerate}
  \item The rule above is claims monotonic as well.
    Hence it satisfies the conditions Corollary~1 of \cite{Cho2014},
    despite what is claimed at the end of their \S~4.
  \item The Proposition alternatively follows from Corollary~\ref{cor:AC}.
    The rule defined above has the advantages of being explicit
    and claims monotonic.
  \end{enumerate}
\end{rem*}

Corollary~\ref{cor:iu-exact} leaves unanswered the possibility of
strategy-freedom$^w$ of an individually \emph{responsive} rule
for different $w$'s at once.
We give an affirmative, albeit non-constructive, answer in the Appendix
(Corollary~\ref{cor:AC}).
We instead present a related negative result here.
Going back to the original model,
even if we have a good reason to believe the original Nash equilibrium
is $w$ after normalization,
we may have committed errors in estimation and measurement.
It is therefore truly desirable for a rule~$z$ to be strategy-free$^{w'}$
for all $w'$ \emph{close enough} to $w$.
Mathematically,
this is about existence of an \emph{open neighborhood}~$U$
of $w$ such that $z$ is strategy-free$^{w'}$ for all $w' \in U$.
Unfortunately, this is ruled out by the corollary below.


In the following, let us call pairs $(u^i, u^j)$
of distinct weights \emph{bad}
(or $(i,j)$-bad, if we want to be explicit on the indices)
if  there are distinct $i, j \in N$
such 
that $u^i_{-i-j} \parallel u^j_{-i-j}$,
that $u^i_i\sum u^j_{-i-j} < u^j_i \sum u^i_{-i-j}$
and that $u^i_j\sum u^j_{-i-j} > u^j_j\sum u^i_{-i-j}$.

\begin{lem}\label{lem:compositehalf}
  Let $(u^i, u^j)$ be an $(i, j)$-bad pair of weights.
  There is no rule that is both strategy-free$^{u^i}$ and strategy-free$^{u^j}$.
\end{lem}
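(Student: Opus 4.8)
The plan is to argue by contradiction. Assuming a single rule $z$ were both strategy-free$^{u^i}$ and strategy-free$^{u^j}$, I would exhibit one wellformed problem $(E,c)$ on which the two strategy-freedom requirements, applied with the two privileged (large) claimants $i$ and $j$ respectively, overlap so tightly that between them they pin down \emph{every} coordinate of $z(E,c)$ and force $z(E,c)=c$. This collides with balance~(\ref{eq:balance}) as soon as $\sum c > E$. The whole difficulty is to build a claim vector $c$ that fits \emph{both} the pattern ``the other claimants sit on the ray through $u^i_{-i}$'' and the pattern ``the other claimants sit on the ray through $u^j_{-j}$,'' and then to place $E$ strictly below $\sum c$ yet above the two scale thresholds so that both instances genuinely bind.

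Set $a := \sum u^i_{-i-j}$ and $b := \sum u^j_{-i-j}$, both positive since $|N|\ge 3$ and weights are interior. Parallelism of $u^i_{-i-j}$ and $u^j_{-i-j}$ gives $u^j_{-i-j} = (b/a)\,u^i_{-i-j}$, which is exactly what lets a single vector lie on both rays off the coordinates $i,j$. Fixing $\lambda>0$ and putting $\mu := \lambda a/b$, I would define $c_i := \mu u^j_i$, $c_j := \lambda u^i_j$, and $c_k := \lambda u^i_k$ for $k \in N\setminus\{i,j\}$. The parallelism then yields $c_{-i}=\lambda u^i_{-i}$ and $c_{-j}=\mu u^j_{-j}$ simultaneously. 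Applying Definition~\ref{defi:sf} for $u^i$ with privileged claimant $i$ and scale $\lambda$ (legitimate because $c_{-i}=\lambda u^i_{-i}$, $E\ge \lambda = \sum \lambda u^i$, and the large-claimant threshold $c_i \ge E-\sum c_{-i}$ is equivalent to $\sum c \ge E$) forces $z_{-i}(E,c)=c_{-i}$; applying it for $u^j$ with privileged claimant $j$ and scale $\mu$ forces $z_{-j}(E,c)=c_{-j}$. Together these two conclusions cover all of $N$ and give $z(E,c)=c$.

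It then remains to choose $E$. The two instances require only $E\ge\lambda$, $E\ge\mu$, and $\sum c\ge E$, so I need $\max(\lambda,\mu) \le E < \sum c$; this interval is nonempty precisely because of the two inequalities defining a bad pair. Indeed $\sum c = \lambda\big[(a/b)u^j_i + u^i_j + a\big]$, and since $u^i_j + a = 1 - u^i_i$ and $u^j_i + b = 1 - u^j_j$ one computes
\[
  \sum c - \lambda = \tfrac{\lambda}{b}\big(u^j_i a - u^i_i b\big), \qquad \sum c - \mu = \tfrac{\mu}{a}\big(u^i_j b - u^j_j a\big),
\]
and both right-hand sides are strictly positive exactly by the defining inequalities $u^i_i\sum u^j_{-i-j} < u^j_i\sum u^i_{-i-j}$ and $u^i_j\sum u^j_{-i-j} > u^j_j\sum u^i_{-i-j}$. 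Taking any such $E$ makes $(E,c)$ wellformed with $\sum c > E$, so balance demands $\sum z(E,c) = E < \sum c$, contradicting $z(E,c)=c$.

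The step I expect to be the genuine obstacle — and the reason the two ``bad'' inequalities are oriented oppositely — is this simultaneous feasibility check. The naive variant that pairs the privileged claimants the other way ($u^i$ with $j$, $u^j$ with $i$) yields a $c$ with $\sum c \le E$, so neither instance binds and no contradiction arises; it is only with the matching $u^i\!\leftrightarrow i$, $u^j\!\leftrightarrow j$ that $\sum c$ is pushed strictly above \emph{both} thresholds $\lambda$ and $\mu$, each inequality handling one threshold. Confirming that parallelism renders $c$ consistent with both rays at once, and correctly bookkeeping the $-i$, $-j$, and $-i-j$ coordinate blocks, is where the care is needed.
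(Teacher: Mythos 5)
Your proposal is correct and follows essentially the same route as the paper's own proof: the same claims vector (with $c_{-i-j}$ on the common ray, $c_i=\mu u^j_i$, $c_j=\lambda u^i_j$), the same thresholds $\lambda,\mu$, and the same use of the two badness inequalities to show both thresholds lie strictly below $\sum c$. The only cosmetic difference is the final step — the paper fixes $E=\lambda\vee\mu$ and derives a contradiction between the two values forced on $z_i(E,c)$, while you take any $E\in[\lambda\vee\mu,\sum c)$ and contradict balance~(\ref{eq:balance}) via $z(E,c)=c$ — which is the same contradiction repackaged.
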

\begin{proof}
  Suppose that $z$ is strategy-free$^{u^k}$  for each $k \in \{i, j\}$.
  Take some $c_{-i-j} \in \nonnegreal^{N - \{i, j\}}$
  parallel to $u^i_{-i-j}$ (and thus to $u^j_{-i-j}$).
  Then,
  for each of $k \in \{i, j\}$,
  every $E$
  at least
  $L_k := \sum c_{-k} + u^k_k (\sum c^k_{-k})/(\sum u^k_{-k})$ and
  at most $\sum c$,
  and $c_i, c_j \in \nonnegreal$,
  strategy-freedom$^{u^k}$ implies that
  $z_{-k}(E, c) = c_{-k}$.
  Let $c_k = u^{k'}_k (\sum c^k_{-k})/(\sum u^k_{-k})$ for $(k, k') = (i, j), (j, i)$.
  By badness, $L_k < \sum c$ ($k = i, j$).
  Let $E = L_i \vee L_j$;
  without loss of generality, assume $L_i \ge L_j$ and thus $E = L_i$.
  Then by strategy-freedom$^{u_i}$,
  \[
    z_i(E, c) = E - \sum z_{-i}(E, c) = E - \sum c_{-i} = u^i_i(\sum c^k_{-k})/(\sum u^k_{-k}).
  \]
  By strategy-freedom$^{u_j}$,
  \[
    z_i(E, c) = (z_{-j}(E,c))_i = c_i.
  \]
  These cannot all be equal due to badness.
\end{proof}

\begin{cor}\label{cor:null}
  Suppose that a rule $z$ is strategy-free$^w$ for all $w \in S$,
  where $S \subseteq \Delta$.
  Then $S$ is a null set.
\end{cor}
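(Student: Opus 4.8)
The plan is to feed Lemma~\ref{lem:compositehalf} into a fibered measure-theoretic argument. First I would translate the hypothesis into a combinatorial constraint on $S$: if $z$ is strategy-free$^w$ for every $w \in S$, then by Lemma~\ref{lem:compositehalf} no two members of $S$ form a bad pair. Since $(w, w')$ is $(j,i)$-bad exactly when $(w', w)$ is $(i,j)$-bad, the constraint that survives is the \emph{unordered} one: for all distinct $w, w' \in S$ and all distinct $i, j \in N$ with $w_{-i-j} \parallel w'_{-i-j}$, neither crossing inequality pattern in the definition of badness may hold. Note that for the conclusion it will suffice to exploit this for a single fixed index pair $i, j$.

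Next I would set up fiber coordinates adapted to that pair. Writing $K = N \setminus \{i, j\}$ and $s = \sum w_{-i-j}$, the map $\Phi \colon w \mapsto (x, y, \delta)$ with $x = w_i / s$, $y = w_j / s$, and $\delta = w_{-i-j} / s$ is a diffeomorphism from the interior of $\Delta$ onto $\posreal^2$ times the interior of the simplex on the coordinates $K$; being a diffeomorphism, it preserves null sets, so it suffices to prove $\Phi(S)$ null for the $(|N|-1)$-dimensional measure. Two weights share a fiber (same $\delta$) precisely when $w_{-i-j} \parallel w'_{-i-j}$, and rewriting the badness inequalities in these coordinates shows that, inside a fiber, the forbidden pattern is exactly $x < x'$ together with $y > y'$ (or its reverse). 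Hence the image of $S$ in each fiber is pairwise comparable in the coordinatewise order of $\posreal^2$, that is, a chain.

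The third ingredient is that a chain $T \subseteq \posreal^2$ is Lebesgue-null, which I would establish by a direct covering rather than by citing a structure theorem. Intersecting $T$ with $[0, M]^2$ and subdividing into an $n \times n$ grid, monotonicity forces the grid cells that meet $T$ to form a chain in $\{0, \dots, n-1\}^2$ under the coordinatewise order; such a chain has at most $2n - 1$ elements, so the covered area is at most $(2n - 1)(M/n)^2$, which tends to $0$ as $n \to \infty$. Letting $M \to \infty$ gives outer measure zero. This already settles the case $|N| = 3$, where the base simplex on $K$ is a single point and $\Phi(S)$ \emph{is} a chain.

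Finally, for $|N| \ge 4$ I would assemble the fibers: every fiber of $\Phi(S)$ over a point $\delta$ is planar-null, the fibers foliate the product $\posreal^2 \times (\text{base})$, and Fubini's theorem then yields that $\Phi(S)$, and hence $S$, is null. I expect the genuine obstacle to be exactly this last step, since the converse direction of Fubini that it uses is valid once the ambient set is measurable but can fail for pathological non-measurable sets; the delicate point is therefore measurability, not geometry. The feature that makes the fibration robust is that the chain structure delivers a \emph{uniform, quantitative} form of fiberwise nullity (at most $2n - 1$ cells of any $n \times n$ grid are met, independently of $\delta$), and this uniform cell count is the natural quantity to integrate over the base; under the measurability that holds in the intended application, feeding it through Fubini closes the argument, and I would flag the fully general non-measurable case as the one point requiring extra care.
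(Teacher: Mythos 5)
Your argument is correct and follows the same skeleton as the paper's proof --- both extract from Lemma~\ref{lem:compositehalf} that $S$ contains no bad pair, fix a single pair $i \neq j$, and then fiber the simplex --- but your decomposition is genuinely different and finer. The paper fibers by the \emph{exact} value of $w_{-i-j}$: its fibers are the one-dimensional lines $l_p = \{w \in \Delta \mid w_{-i-j} = p\}$, each of which meets $S$ at most once, and the conclusion is then delegated to an unspecified ``measure-theoretic argument.'' You fiber by the \emph{direction} of $w_{-i-j}$, keep both ordered instances of badness, and thereby upgrade ``at most one point per line'' to ``each two-dimensional fiber section is a chain,'' which your grid-covering argument shows to be null \emph{in outer measure, with no measurability hypothesis}. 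This buys something concrete: for $|N| = 3$ your proof is complete and unconditional, whereas the paper's weaker one-point-per-line property cannot suffice by itself (under CH there is a Sierpi\'nski-type set of full outer measure meeting every line of the paper's foliation at most once), so the paper's final step tacitly assumes measurability of $S$ even in that case. The Fubini caveat you flag for $|N| \ge 4$ is equally real --- ``all sections null'' implies ``null'' only for measurable sets, and uniform section bounds do not repair this, since Sierpi\'nski's example has singleton sections --- but this gap is present, silently, in the paper's own proof as well; your proposal is the more rigorous of the two, and isolating nonmeasurable $S$ in the case $|N| \ge 4$ as the remaining difficulty is exactly the right diagnosis.
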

\begin{proof}
  Choose distinct $i \neq j$ in $N$.
  For $p \in \nonnegreal^{N - \{i, j\}}$,
  the line $l_p := \{w \in \Delta \mid w_{-i-j} = p\}$
  intersects $S$ with at most one point.
  Indeed, if $u, v \in l_p \cap S$ are distinct,
  either $(u, v)$ or $(v, u)$ is an $(i,j)$-bad pair.
  The claim then follows from a measure-theoretic argument.
\end{proof}

By Corollary~\ref{cor:null},
for any weight $w$,
there is no open neighborhood $U$ of $w$ a rule $z$ such that
$z$ is strategy-free$^{w'}$ for all $w' \in U$.
The next best thing to hope for, therefore, is something like
continuous strategy-freedom$^w$.
The following fact suggests that this particular formalization is a natural axiom.
\begin{prop}\label{prop:cont}
  A rule is  continuously strategy-free$^w$
  if and only if it is strategy-free$^w$ and uniformly claims continuous
  on $\problems^N$.
\end{prop}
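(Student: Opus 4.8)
The plan is to prove the two implications separately, dispatching the easy halves first. The quickest is that continuous strategy-freedom$^w$ implies strategy-freedom$^w$: given $i$, $\lambda$, $E \ge \sum\lambda w$ and $c_i \ge E - \sum\lambda w_{-i}$, I would invoke Definition~\ref{defi:csf} with $c_{-i}$ set \emph{exactly} equal to $\lambda w_{-i}$, so that the hypothesis $\lVert\lambda w_{-i}-c_{-i}\rVert<\delta$ holds for every $\delta$, and with $c'_i = c_i$ (wellformedness being precisely $c_i \ge E-\sum\lambda w_{-i}$). This yields $\lVert z_{-i}(E,(c_i,\lambda w_{-i})) - \lambda w_{-i}\rVert < \epsilon$ for every $\epsilon>0$, hence equation~(\ref{eq:defi}).

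For the converse implication, assume $z$ is strategy-free$^w$ and uniformly claims continuous, fix $\epsilon>0$, and take a configuration meeting the hypotheses of Definition~\ref{defi:csf} with witness $\lambda$ so that $\lVert\lambda w_{-i}-c_{-i}\rVert<\delta$. The crucial preliminary is that
\[
  z_{-i}(E,(c'_i,\lambda w_{-i})) = \lambda w_{-i}
\]
holds unconditionally: when $(E,(c'_i,\lambda w_{-i}))\in\problems^N$ this is strategy-freedom$^w$, and otherwise the convenience axiom~(\ref{eq:convenience}) forces $z(E,(c'_i,\lambda w_{-i}))=(c'_i,\lambda w_{-i})$, giving the same value; so the boundary case dissolves. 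A triangle inequality then splits $\lVert z_{-i}(E,(c'_i,c_{-i})) - c_{-i}\rVert$ into a term below $\delta$ and the term $\lVert z_{-i}(E,(c'_i,c_{-i})) - z_{-i}(E,(c'_i,\lambda w_{-i}))\rVert$, whose two arguments share the same $E$ and the same $i$-th claim and differ only in the $-i$ block by less than $\delta$; uniform claims continuity bounds it. Taking $\delta$ below both $\epsilon/2$ and the relevant modulus of continuity closes this direction.

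The remaining and substantive task is that continuous strategy-freedom$^w$ forces uniform claims continuity. Near the ray this is immediate: for each fixed $i$, Definition~\ref{defi:csf} keeps $z_{-i}(E,(c'_i,c_{-i}))$ within $\epsilon$ of $c_{-i}$ whenever $c_{-i}$ is within $\delta$ of the ray $\{\lambda w_{-i} : \lambda \ge 0\}$, with $c'_i$ unrestricted, and balance~(\ref{eq:balance}) then controls the last coordinate through $z_i = E - \sum z_{-i}$; so $z(E,\cdot)$ is uniformly continuous on each tube around such a ray, and I would combine these estimates over all $i\in N$, which share a common $\delta$. The step I expect to be the main obstacle is propagating this control to claim vectors lying far from every such ray, where Definition~\ref{defi:csf} supplies no direct estimate: continuous strategy-freedom$^w$ only constrains $z$ on the union of these tubes, so any argument must recover the behaviour of $z$ elsewhere by squeezing such vectors between tube configurations, leaning on balance~(\ref{eq:balance}) and the bound $z(E,c)\leqq c$ of~(\ref{eq:not-exceeding-claims}). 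A subtlety pervading both directions is that the $\delta$ of Definition~\ref{defi:csf} must be chosen independently of $E$, so the continuity modulus has to be controlled uniformly in $E$ rather than merely for each fixed $E$; I would address this by noting that every comparison above displaces the claims by an absolute amount below $\delta$ and by reading uniform claims continuity as an estimate over the whole of $\problems^N$. This interface between the local, tube-bound control and the global conclusion is the delicate heart of the proposition, and where a naive argument is most likely to break down.
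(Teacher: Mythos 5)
Your first two implications are correct and are essentially the paper's own argument: continuous strategy-freedom$^w$ yields strategy-freedom$^w$ by specializing $c_{-i}$ to be exactly $\lambda w_{-i}$ and letting $\epsilon \to 0$, and strategy-freedom$^w$ together with uniform claims continuity yields continuous strategy-freedom$^w$ by a triangle inequality through the ray point $(c'_i, \lambda w_{-i})$. Your observation that the possibly ill-formed problem $(E,(c'_i,\lambda w_{-i}))$ is covered by the convenience condition~(\ref{eq:convenience}) is a boundary case the paper's proof passes over silently, and it is handled correctly in your write-up.

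The genuine gap is the remaining implication: you never prove that continuous strategy-freedom$^w$ implies uniform claims continuity; you only identify it as the main obstacle and gesture at ``squeezing'' off-tube claim vectors between tube configurations, which is not an argument. Moreover, the difficulty is not one that such a squeezing argument can dispatch. Definition~\ref{defi:csf} constrains $z$ only at problems $(E,(c'_i,c_{-i}))$ whose $c_{-i}$ lies within $\delta$ of the ray $\{\lambda w_{-i} \colon \lambda \ge 0\}$. Take $N=\{0,1,2\}$, $w=(1/3,1/3,1/3)$, and $c=(1,2,4)$: every projection $c_{-i}$ is at distance at least $1/\sqrt{2}$ from its ray, so for small $\delta$ no instance of Definition~\ref{defi:csf} mentions any problem near $(E,c)$. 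One can therefore perturb $\rulename{CEA}^w$ discontinuously on a small neighborhood of such a problem (shifting award between two claimants with slack, so that (\ref{eq:balance}) and (\ref{eq:not-exceeding-claims}) are preserved) without disturbing any tube; the perturbed rule remains continuously strategy-free$^w$ but is not claims continuous, so balance and claims-boundedness alone cannot recover the estimate you need. For what it is worth, the paper's own proof of this direction founders at exactly the point you flagged: it asserts that for arbitrary $c,d$ with $\lVert c-d\rVert<\delta$ and any $i$ there exist $\lambda,\mu>0$ with $\lVert c-d\rVert \ge \lVert c_{-i}-\lambda w_{-i}\rVert + \lVert d_{-i}-\mu w_{-i}\rVert$, which fails whenever $c_{-i}$ lies farther from the ray than $\lVert c-d\rVert$ (already for $c=d$ off the ray). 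So your diagnosis of where the real work lies is exactly right, but your proposal does not supply that work, and in this generality the implication appears unprovable without restricting the continuity claim to the tubes or adding hypotheses.
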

\begin{proof}
  We prove the ``if'' direction first.
  Let $\epsilon > 0$ be given.
  We may take $\delta_{\text{uc}}>0$ witnessing the
  uniform continuity of $z$ for $\epsilon$
  and let $\delta = \delta_{\text{uc}} \wedge \epsilon$.
  Let $E, \lambda, i, c, c'_i$ be arbitrary as in Definition~\ref{defi:csf},
  so that $\lVert (c'_i, c_{-i}) - (c'_i, \lambda w_{-i}) \rVert
  = \lVert c_{-i} - \lambda w_{-i}\rVert < \delta$.
  By strategy-freedom$^w$, we have $z_{-i}(E, (c_i', \lambda w_{-i})) = \lambda w_{-i}$.
  Therefore,
  \begin{align*}
  \lVert z_{-i}(E, (c'_i, c_{-i})) - c_{-i} \rVert
  &\le
    \begin{aligned}[t]
&\lVert z_{-i}(E, (c'_i, c_{-i})) - z_{-i}(E, (c'_i, \lambda w_{-i}))\rVert\\
  &+ \lVert \lambda w_{-i} - c_i \rVert
\end{aligned}
\\
  &\le 2\epsilon
  \end{align*}as desired.

  The ``only if'' part can be proved as follows.
  The strategy-freedom$^w$ is clear.
  Let $\epsilon > 0$ be arbitrary.
  Take $\delta_{\text{sf}}$ witnessing the continuous strategy-freedom$^w$
  of $z$ for $\epsilon$.
  Define $\delta = \delta_{\text{sf}} \wedge \epsilon$.
  Consider arbitrary $c, d \in \nonnegreal^N$ with $\lVert c - d \rVert < \delta$.
  Take any $i \in N$.
  There exist $\lambda, \mu > 0$
  such that \[\lVert c - d \rVert \ge (\lVert c_{-i} - \lambda w_{-i} \rVert
  + \lVert d_{-i} - \mu w_{-i} \rVert) \vee \lVert \lambda w_i - \mu w_i\rVert.\]
  Note that $z_i(E, x) = E - \sum z_{-i}(E, x)$ for $x = c, d$,
  whence $|z_i(E, c) - z_i(E, d)| \le \lVert z_{-i}(E, c) - z_{-i}(E, d) \rVert_1$.
  Therefore,
  \begin{align*}
    \lVert z(E, c) - z(E, d) \rVert
    &\le |z_i(E, c) - z_i(E, d)| + \lVert z_{-i}(E, c) - z_{-i}(E, d) \rVert \\
    &\le \left(\frac{1}{\sqrt 2} + 1\right) \lVert z_{-i}(E, c) - z_{-i}(E, d) \rVert \\
    &\le
      \left(\frac{1}{\sqrt 2} + 1\right)
      \begin{aligned}[t]
    (&\lVert z_i(E, c) - z_i(E, (c_i, \lambda w_i))  \rVert\\
      &\quad + \lVert \lambda w_i - \mu w_i \rVert\\
      &\quad + \lVert z_i(E, (d_i, \mu w_i)) - z_i(E, d) \rVert)
    \end{aligned}
\\
    &\le 3\left(\frac{1}{\sqrt 2} + 1\right)\epsilon,
  \end{align*}
  as desired.
\end{proof}

We conclude the section by a characterization  of $\rulename{CEA}^w$
in terms of strategy-freedom$^w$ for
general $N$.
It is instructive to present our result as a corollary of
a characterization of the following \emph{class}
of rules 
containing $\rulename{CEA}^w$.
\begin{defi}
  For a $\kappa \in \nonnegreal$ and a weight $w \in \Delta$,
  the rule $\rulename{CEA}^w_\kappa$ is defined by
  \[
    \rulename{CEA}^w_\kappa(E, c) = (c_i \wedge w_i (c_i/w_i)^{-\kappa} \lambda)_{i \in N},
  \]
  where $\lambda$ is chosen to satisfy (\ref{eq:balance}),
  for every $(E, c) \in \problems^N$.
\end{defi}
Obviously, $\rulename{CEA}^w = \rulename{CEA}_0^w$.

\begin{thm}
  For each $w \in \Delta$,
  the rules of the form $\rulename{CEA}^w_\kappa$ are
  the exactly separable directional ones
  that are strategy-free$^w$ and homogeneous.
\end{thm}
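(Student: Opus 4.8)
The plan is to prove both inclusions. For the easy one I first note that $\rulename{CEA}^w_\kappa = P^u$ for the generating functions $u_i(x) = w_i(x/w_i)^{-\kappa}$ (with $u_i(0):=0$), which are strictly positive on $\posreal$, so each $\rulename{CEA}^w_\kappa$ is indeed separable directional; homogeneity follows because rescaling $(E,c)\mapsto(tE,tc)$ is absorbed by replacing the balancing parameter $\lambda$ of the rule by $t^{1+\kappa}\lambda$. For strategy-freedom$^w$ I would work with the capping thresholds $t_k(x):=(x/w_k)^{1+\kappa}$: claimant $k$ receives its full claim precisely when the balancing parameter $\nu$ satisfies $\nu\ge t_k(c_k)$. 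Given a problem $(E,(c_i,\lambda w_{-i}))$ as in Definition~\ref{defi:sf}, every $j\ne i$ has threshold $\lambda^{1+\kappa}$, while $E\ge\lambda$ together with $c_i\ge E-\lambda(1-w_i)$ forces $c_i/w_i\ge\lambda$ and hence makes $i$'s threshold at least $\lambda^{1+\kappa}$. Evaluating the total award at $\nu=\lambda^{1+\kappa}$ and using $\kappa\ge0$ bounds it above by $\lambda\le E$; since the total is nondecreasing in $\nu$, the balancing value is at least $\lambda^{1+\kappa}$, so each $j\ne i$ is capped at $\lambda w_j$, which is exactly (\ref{eq:defi}).

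For the converse, let $z=P^u$ be separable directional, homogeneous, and strategy-free$^w$. Write $h_i(x):=w_i/u_i(x)$ and $\hat h_i(s):=h_i(w_i s)$. I would apply strategy-freedom$^w$ at the minimal endowment $E=\lambda$, where (\ref{eq:defi}) forces $z_i=\lambda w_i$, so that $i$ is rationed and each $j\ne i$ is capped; reading off the balancing parameter $\nu=\lambda w_i/u_i(c_i)$ and the capping condition for $j$ yields the family of inequalities $\hat h_i(s)\ge\hat h_j(\lambda)$ for all $i\ne j$ and all $s>\lambda$. Chaining two such inequalities through a third index (available since $|N|\ge3$) shows each $\hat h_i$ is nondecreasing, and comparing the two orderings of a pair at a common point of continuity shows the $\hat h_i$ take the same value there; thus all the $\hat h_i$ coincide off a countable set.

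I then bring in homogeneity. Restricting to problems lying in the ``all rationed'' regime — which can always be arranged by taking $E$ small enough, and which survives the scalings used below — the rule is proportional division with weights $u_i(c_i)$, and homogeneity forces $u_i(tc_i)/u_i(c_i)$ to be independent of both $i$ and $c_i$; this produces a multiplicative $\Phi$ with $u_i(tx)=\Phi(t)u_i(x)$ for all $x>0$. Monotonicity of the $\hat h_i$ makes $\Phi$ monotone, so the multiplicative Cauchy equation gives $\Phi(t)=t^{-\kappa}$ with $\kappa\ge0$ (the sign coming from $\hat h_i$ being nondecreasing). Hence $u_i(x)=C_i x^{-\kappa}$ and $\hat h_i(s)=(w_i^{1+\kappa}/C_i)\,s^{\kappa}$, which are continuous; the coincidence of the $\hat h_i$ off a countable set then forces $w_i^{1+\kappa}/C_i$ to be a common constant. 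Since a uniform positive factor of all the $u_i$ leaves $P^u$ unchanged, I conclude $P^u=\rulename{CEA}^w_\kappa$.

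The main obstacle is the regularity issue in the converse: separable directional rules presuppose no continuity of the generating functions $u_i$, so the functional equations produced by homogeneity admit pathological (non-measurable) solutions in isolation. The crux is therefore to extract enough regularity — precisely the monotonicity of the $\hat h_i$ — from strategy-freedom$^w$ \emph{before} invoking homogeneity, so that the multiplicative equation can only have power-law solutions. A secondary point requiring care is checking that the all-rationed configurations used in the homogeneity step are genuinely available for the scalings invoked, together with the bookkeeping of the single ambient constant relating the $C_i$ to $w$.
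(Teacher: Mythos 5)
Your proposal is correct, and in the converse direction it takes a genuinely more careful route than the paper's. The shared skeleton is the same: extract from strategy-freedom$^w$ a rigidity-plus-monotonicity statement about the generating functions $u_i$, then use homogeneity to produce a multiplicative Cauchy equation whose monotone solutions are power laws. The execution differs at the key regularity step. The paper derives the inequality~(\ref{eq:sf-u}) only for tuples in $D$ (where $c_i'>\lambda w_i$ strictly) and then substitutes the boundary value $c_i'=\lambda w_i$ to obtain the exact relation $u_j(\lambda w_j)/u_i(\lambda w_i)=w_j/w_i$, hence a single function $f$ with $u_i(\lambda w_i)=w_i f(\lambda)$; the paper itself flags this substitution with the parenthetical ``(does this require endowment continuity?)'', and without continuity of the $u_i$ it is indeed not licensed. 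Moreover, the paper's subsequent claim that $f$ is non-increasing uses~(\ref{eq:sf-u}) with $c_i'=\lambda' w_i$, which as literally stated only yields $\lambda' f(\lambda)\ge\lambda f(\lambda')$; the stronger, minimal-endowment form of the constraint is what is actually needed. You avoid both issues: you work throughout with that strongest constraint, obtained at $E=\lambda$, namely $\hat h_i(s)\ge\hat h_j(\lambda)$ for all $s>\lambda$, recover monotonicity of each $\hat h_i$ by chaining through another index, settle for equality of the $\hat h_i$ only at common continuity points (co-countably many), and postpone pinning down the constants $C_i$ until after homogeneity has forced the power-law form $u_i(x)=C_i x^{-\kappa}$, at which point agreement at a single continuity point suffices and the common factor is absorbed into the balancing parameter. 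You also apply homogeneity directly to the ratios $u_i(tx)/u_i(x)$ on all-rationed problems rather than to $f$, and you prove the ``easy'' inclusion explicitly by a threshold computation where the paper merely asserts it. The net effect is the same theorem, but your argument closes the regularity gap that the paper's own marginal note leaves open, at the cost of the extra bookkeeping with continuity points; the paper's version is shorter if one grants the boundary substitution.
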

\begin{proof}
  It is easy to see that $\rulename{CEA}_\kappa^w$ is strategy-free$^w$ and homogeneous.
  Moreover, $\rulename{CEA}^w_\kappa = P^u$,
  where $u_i(c) = w_i(c_i/w_i)^{-\kappa}$.

  Conversely, assume that $P^u$ is, strategy-free$^w$, and homogeneous.
  Consider an arbitrary tuple~$((E, (c'_i, \lambda w_{-i})), w) \in D$.
  By strategy-freedom$^w$,
  for every $j \in N \setminus \{i\}$,
  we have
  \begin{equation}
    u_j(\lambda w_j) / u_i(c'_i) \ge \lambda w_j / c'_i.\label{eq:sf-u}
  \end{equation}
  Letting $c'_i = \lambda w_i$, (does this require endowment continuity?)
  we obtain $u_j(\lambda w_j) / u_i(\lambda w_i) \ge w_j / w_i$.
  Since $i \neq j$ are arbitrary,
  $u(j, \lambda w_j) / u(i, \lambda w_i) = w_j / w_i$.
  Thus, there exists a function $f: \posreal \to \posreal$ such that
  \begin{equation}
    u_i(\lambda w_i) = w_i f(\lambda)\label{eq:defif}
  \end{equation}
  for every $i \in N$ and $\lambda > 0$,
  i.e.,
  $u_i(c_i) = w_i f(c_i / w_i)$ for every $c_i \in \nonnegreal$ and $i \in N$.

  Going back to (\ref{eq:sf-u}),
  letting $c'_i = \lambda' w_i$ for $\lambda' \ge \lambda$ this time,
  and using the definition~(\ref{eq:defif}),
  we have $f(\lambda') \le f(\lambda)$.
  Thus $f$ is non-increasing.

  Now, one can show that homogeneity implies $u_i(c) \parallel u_j(\alpha c)$
  for all $c \ge 0$ and $\alpha > 0$.
  In other words,
  $u_i(\alpha \lambda w_i)/u_i(\lambda w_i) = u_j(\alpha \lambda' w_j) / u_j(\lambda' w_j)$
  for all $\alpha, \lambda, \lambda' > 0$ and $i, j \in N$.
  By (\ref{eq:defif}), this implies, in our case,
  \begin{equation}
    \frac{f(\alpha\lambda)}{f(\lambda)} = \frac{f(\alpha\lambda')}{f(\lambda')}\label{eq:mult}
  \end{equation}
  for arbitrary $\alpha, \lambda, \lambda' > 0$.
  Define a function $F: \mathbb R \to \mathbb{R}$ by
  $F(x) = \log f (\exp x)$,
  where the common base is $f(1)$, which we shall assume henceforth.
  Since $f$ is non-increasing, so is $F$.
  Now for $x, y \in \mathbb R$,
  \begin{align}
    F(x) + F(y) - F(xy) &= \log\left( \frac{f(\exp x)f(\exp y)}{f(\exp(x+y))} \right)\notag\\
                        &= \log f(1) \label{eq:logf1}\\
                        &= 0\notag,
  \end{align}
  where (\ref{eq:mult}) with
  $(\alpha, \lambda, \lambda') = (\exp x, 1, \exp y)$
  is used to derive (\ref{eq:logf1}).
  Since $F$ is non-increasing,
  $F(x) = -\kappa x$ for some $\kappa \in \nonnegreal$.
  Unraveling the relevant definitions leads to the desired conclusion.
\end{proof}

\begin{cor}\label{cor:char}
  For each $w \in \Delta$,
  the rule $\rulename{CEA}^w$ is the only separable directional rule
  that is strategy-free$^w$, homogeneous, and claims monotonic.
\end{cor}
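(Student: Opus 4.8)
The plan is to read off the Corollary from the Theorem just proved together with a monotonicity analysis of the one-parameter family $\rulename{CEA}^w_\kappa$. The Theorem says that a separable directional rule is strategy-free$^w$ and homogeneous exactly when it equals $\rulename{CEA}^w_\kappa$ for some $\kappa \in \nonnegreal$; since $\rulename{CEA}^w = \rulename{CEA}^w_0$, it remains only to show that, within this family, claims monotonicity singles out $\kappa = 0$.

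First I would record that $\rulename{CEA}^w$ is claims monotonic. This is standard: writing the award as $c_i \wedge \lambda w_i$ with $\lambda$ balancing, raising $c_i$ (others fixed) weakly lowers the balancing $\lambda$, and a short case split---agent $i$ capped versus uncapped, the capped case forcing $\lambda$ to be unchanged---shows $z_i$ cannot strictly fall. I would relegate this to a one-line verification or a citation to \cite{Thomson2019}.

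The substance is to show that $\rulename{CEA}^w_\kappa$ violates claims monotonicity for every $\kappa > 0$. Here $\rulename{CEA}^w_\kappa = P^u$ with $u_i(t) = w_i^{1+\kappa}\,t^{-\kappa}$, so the effective weight $u_i(c_i)$ of a claimant is strictly decreasing in her own claim. I would exploit this directly: fix a claimant $i$ and an $E > 0$, set $c_j = C$ for all $j \neq i$ with $C$ large, and let $c_i = x$ range over $x > E$. For such parameters the allocation is interior (no claim binds), so each $z_k = \lambda u_k(c_k)$; solving the balance $\sum_k \lambda u_k(c_k) = E$ for $\lambda$ and substituting gives
\[
  z_i(E, (x, c_{-i})) = \frac{E}{1 + S\, w_i^{-1-\kappa}\,(x/C)^{\kappa}}, \qquad S = \sum_{j \neq i} w_j^{1+\kappa},
\]
which, for $\kappa > 0$, is strictly decreasing in $x$. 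Choosing $E < x < x'$ then yields $z_i(E, (x', c_{-i})) < z_i(E, (x, c_{-i}))$ with all other claims held fixed, contradicting claims monotonicity. Hence no $\rulename{CEA}^w_\kappa$ with $\kappa > 0$ is claims monotonic, and $\rulename{CEA}^w$ is the unique separable directional rule that is strategy-free$^w$, homogeneous, and claims monotonic.

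The main obstacle is bookkeeping rather than ideas: I must justify the interior regime so that the closed form for $z_i$ is valid, i.e., confirm that for $x > E$ and $C$ large neither agent $i$'s constraint $z_i \le c_i$ nor the constraints $z_j \le c_j$ is active. This reduces to checking that $z_i = E/(1 + S\,w_i^{-1-\kappa}(x/C)^{\kappa}) < x$, which is immediate from $E < x$ and positivity of the second term in the denominator, and that $z_j \to 0$ as $C \to \infty$, so $z_j < C$; both follow once the parameters are chosen in the right order.
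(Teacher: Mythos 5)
Your proposal is correct, and its skeleton matches the paper's: both invoke the Theorem to reduce the Corollary to showing that, within the family $\rulename{CEA}^w_\kappa$, claims monotonicity holds exactly when $\kappa = 0$, and both dispose of the $\kappa = 0$ case as routine. Where you diverge is in exhibiting the violation for $\kappa > 0$. The paper argues qualitatively with three claimants: it sets up a configuration with $c_1/w_1 = c_2/w_2$, raises $c_2$, and uses the fact that claimant $0$'s award is unaffected together with the balance condition (\ref{eq:balance}) to conclude that claimant $2$'s award must drop --- a symmetry-breaking argument with no explicit formula. You instead work in the interior regime where no cap binds, solve for the balancing $\lambda$ explicitly, and obtain the closed form $z_i = E/\bigl(1 + S\, w_i^{-1-\kappa}(x/C)^{\kappa}\bigr)$, which is visibly strictly decreasing in the own claim $x$ when $\kappa > 0$. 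Your route costs a little bookkeeping (verifying that $x > E$ and $C > E$ keep all caps slack, which you do correctly --- indeed $z_k < E$ for every $k$ makes the check immediate), but it buys robustness: it works verbatim for arbitrary weights $w$ and any $|N| \ge 2$, and it makes the direction of the monotonicity failure unambiguous, whereas the paper's comparison of the awards of claimants $1$ and $2$ after the claim increase is stated with what appears to be the inequality reversed (raising $c_2$ lowers the distortion weight $w_2(c_2/w_2)^{-\kappa}$, so claimant $2$'s award should fall below claimant $1$'s, not rise above it), and tacitly leans on the symmetric choice of parameters. Both arguments are sound in substance; yours is the more explicit and more general of the two.
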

\begin{proof}
  In light of the Theorem,
  it suffices to show that $\rulename{CEA}^w_\kappa$ is claims monotonic
  if and only if $\kappa = 0$.
  Obviously, $\rulename{CEA}^w_0$ is claims monotonic.
  On the other hand, assume that $\kappa > 0$ and,
  for simplicity, that $N = \{0, 1, 2\}$.
  Consider $c \in \nonnegreal^N$ such that $c_0/w_0 < c_1 / w_1 = c_2 / w_2$
  and $E  \in (c_0, c_0 + c_1 + c_2)$.
  Here, one can see that $(\rulename{CEA}^w_\kappa(E, c))_1 = (\rulename{CEA}^w_\kappa(E, c))_2$.
  Now take $c'_2 > c_2$, so that $c_1 / w_1 < c_2 / w_2$.
  We then have $(\rulename{CEA}^w_\kappa(E, (c_2', c_{-2})))_1 < (\rulename{CEA}^w_\kappa(E, c_2', c_{-2}))_2$.
  On the other hand, $(\rulename{CEA}^w_\kappa(E, c))_0 = (\rulename{CEA}^w_\kappa(E, (c_2', c_{-2})))_0$,
  whence by (\ref{eq:balance}),
  $(\rulename{CEA}^w_\kappa(E, c))_2 > (\rulename{CEA}^w_\kappa(E, (c_2', c_{-2})))_2$,
  which violates claims monotonicity.
\end{proof}

The preceding Theorem and Corollary can be stated in an axiomatic manner 
by quoting the axiomatization of separable directional rules by \mbox{\cite{Harless2016}}.

\section{Concluding remarks}\label{sec:concluding-remarks}
In the present article,
we presented a new axiom, strategy-freedom, on rules for claims problems.
The axiom is inspired by, and has potential applications in,
supply chain research.
If goods whose amount is limited are distributed to retailers according to a
strategy-free$^w$ rule,
the retailers in a competition whose Nash equilibrium is $c^*$ have no incentive to change their orders from $c^*$ as long as it is a multiple of $w$.
The relevance of the axiom is further confirmed by the fact
(Theorem~\ref{thm:sf-alpha})
that, under certain contidions,
it is equivalent to a natural generalization of
a known axiom (min-of-claim-and-equal division lower bounds on awards)
and the fact (Corollary~\ref{cor:char}) that it
characterizes $\rulename{CEA}^w$,
the weighted constrained equal awards rule,
together with other natural conditions.

It remains an important future task
to analyze game-theoretically a variant of strategy-freedom we introduced,
continuous strategy-freedom$^w$,
in the original context of retail competition
and evaluate the degree, if any, to which a rule satisfying that axiom
might incentivize the retailers to inflate their orders
in case their original Nash equilibrium is not a multiple of $w$.

\section*{Acknowledgements}
The author has been financially supported by the Corporate
  Sponsored Research Program in  Advanced Logistic Science
  at the Research Center for Advanced Science and Technology,
  The University of Tokyo.
  The members of the Research Group, as well as Reid Dale and Nariaki Nishino,
  have provided useful feedback.

\appendix

\section{A rule strategy-free$^w$ for infinitely many $w$}\label{sec:rule-strategy-freew}
If we allow a pathological $W \subseteq \Delta$,
we can have an infinite $W$ and a single rule that is strategy-free$^w$ for all $w \in W$
at once.
The notion of bad pairs, which we used to derive the negative result earlier,
turns out to be relevant.

\begin{prop}\label{lem:composite}
  Let $B$ be the binary relation on $\Delta$ consisting of bad pairs.
  Recall the definition of the binary relation $D \subseteq \problems^N \times \Delta$
  right after Definition~\ref{defi:sf}.
  Define $B' \subseteq \Delta \times \Delta$
  as
  $(w,  w') \in B'$ if and only if there exists $i \in N$
  with $w_{-i} \parallel w'_{-i}$.
  The union $B \cup B'$ is the composite of $D$ and $D^\inv$.
\end{prop}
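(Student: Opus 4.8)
The plan is to read the composite of $D$ and $D^\inv$ concretely: it is the relation on $\Delta$ consisting of those pairs $(w, w')$ for which there is a \emph{single} wellformed problem $(E, c) \in \problems^N$ with both $((E,c), w) \in D$ and $((E,c), w') \in D$ (the $D^\inv$ sends a weight back to a witnessing problem, and $D$ sends that problem forward to a weight). Unfolding $D$, membership $((E,c),w) \in D$ hands me an index $i$ and a value $c'_i < c_i$ with $(c'_i, c_{-i}) \parallel w$ and $E \ge \sum c_{-i} + c'_i$, while $((E,c),w') \in D$ hands me $j$, $c''_j < c_j$ with $(c''_j, c_{-j}) \parallel w'$ and $E \ge \sum c_{-j} + c''_j$. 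I would then split on whether the two witnessing indices can be chosen equal or distinct and match these two cases to $B'$ and $B$ respectively.

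First, the case $i = j$. Here $c_{-i}$ is simultaneously parallel to $w_{-i}$ (from $(c'_i, c_{-i}) \parallel w$) and to $w'_{-i}$ (from $(c''_i, c_{-i}) \parallel w'$), so $w_{-i} \parallel w'_{-i}$ and hence $(w, w') \in B'$. Conversely, given $(w, w') \in B'$ with witnessing $i$, I set $c_{-i}$ equal to a common positive multiple of $w_{-i}$ and $w'_{-i}$, choose $c'_i, c''_i$ making $(c'_i, c_{-i}) \parallel w$ and $(c''_i, c_{-i}) \parallel w'$, pick any $c_i$ exceeding both, and take $E = \sum c$; then wellformedness holds and the endowment inequalities reduce to $c'_i < c_i$ and $c''_i < c_i$.

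Next, the case $i \ne j$, which carries the real content. From $(c'_i, c_{-i}) \parallel w$ I extract $\alpha > 0$ with $c'_i = \alpha w_i$ and $c_k = \alpha w_k$ for $k \ne i$; from $(c''_j, c_{-j}) \parallel w'$ I extract $\beta > 0$ with $c''_j = \beta w'_j$ and $c_k = \beta w'_k$ for $k \ne j$. On the coordinates $k \ne i,j$ both readings give $c_k$, so $\alpha w_{-i-j} = \beta w'_{-i-j}$, whence $w_{-i-j} \parallel w'_{-i-j}$ and, summing, $\alpha/\beta = \sum w'_{-i-j}/\sum w_{-i-j}$ (these sums are positive since weights are interior and, in the regime $|N| \ge 3$, $N \setminus \{i,j\}$ is nonempty). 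Now $c_i = \beta w'_i$ and $c_j = \alpha w_j$, so the strict inequalities $c'_i < c_i$ and $c''_j < c_j$ read $\alpha w_i < \beta w'_i$ and $\beta w'_j < \alpha w_j$; clearing $\alpha, \beta$ with the ratio just found turns these into exactly $w_i \sum w'_{-i-j} < w'_i \sum w_{-i-j}$ and $w_j \sum w'_{-i-j} > w'_j \sum w_{-i-j}$, i.e.\ the two numeric conditions making $(w, w')$ an $(i,j)$-bad pair. Running the computation backwards gives the converse: from an $(i,j)$-bad pair I put $\beta = 1$, $\alpha = \sum w'_{-i-j}/\sum w_{-i-j}$, set $c_k = \alpha w_k$ for $k \ne i$ together with $c_i = w'_i$, $c'_i = \alpha w_i$, $c''_j = w'_j$, and take $E = \sum c$; the two badness inequalities are precisely what makes $c'_i < c_i$, $c''_j < c_j$, and both endowment inequalities hold.

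Combining the two cases yields that the composite equals $B \cup B'$. The main obstacle is the bookkeeping of the $i \ne j$ case: correctly pairing each weight with its own witnessing index, keeping $\alpha$ on the $w$-side and $\beta$ on the $w'$-side, and eliminating the scalars so that the two inequalities land in the deliberately asymmetric form of the badness conditions with the right orientation. Everything else is routine, and as a consistency check I note that the composite is automatically symmetric (a common problem for $(w,w')$ is a common problem for $(w',w)$), matching the symmetry of both $B$ and $B'$.
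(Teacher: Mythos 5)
Your proof is correct and takes essentially the same route as the paper's: the same case split on whether the two witnessing indices coincide (giving $B'$) or differ (giving $B$ via the parallelism scalars and the ratio $\sum w'_{-i-j}/\sum w_{-i-j}$), with the same algebra converting the strict inequalities $c'_i < c_i$ and $c''_j < c_j$ into the two badness inequalities. The only difference is that where the paper defers $B \subseteq D \circ D^{\inv}$ to the construction inside the proof of Lemma~\ref{lem:compositehalf} and declares $B' \subseteq D \circ D^{\inv}$ easy, you write out both witnessing problems explicitly, which fills in detail rather than changing the argument.
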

\begin{proof}
  That $B$ is included in the composite of $ D$ and $D^\inv$ follows from the proof of Lemma~\ref{lem:compositehalf}.
  It is easy to see that $B' \subseteq D \circ D^\inv$. 
  Conversely,
  suppose that $u^i \mathbin{D^\inv} (E, c) \mathbin{D} u^j$,
  where $u^i, u^j \in \Delta$ and $(E, c) \in \problems^N$.
  Suppose first that $i(u^i, c) = i(u^j, c) =: i$,
  where $i(\cdot, \cdot)$ is as in the definition of $D$.
  Then,
  $u^i_{-i} \parallel c_{-i} \parallel u^j_{-i}$
  by the definition of $D$, whence $(u^i, u^j) \in B'$.
  Secondly, assume that $i(u^i, c)$ and $i(u^j, c)$ are distinct,
  which we let $i$ and $j$, respectively.
  We show that $(u^i, u^j)$ are bad.
  As before, $u^i_{-i} \parallel c_{-i}$ and $c_{-j}\parallel u^j_{-j}$,
  whence $u^i_{-i-j} \parallel u^j_{-i-j}$.
  Furthermore,
  \begin{equation}
    c'_k(c, u^k) < c_k\label{eq:ineq}
  \end{equation}
for both $k = i, j$,
  where the left-hand side is also as in the definition of $D$.
  Because $(c'_k(c,u^k), c_{-k}) \parallel u^k$ ($k = i, j$),
  we have
  \begin{equation}
    c'_k(c, u^k) = u^k_k (\sum c_{-i-j})/ (\sum u_{-i-j})\label{eq:proportionbyhead}
  \end{equation}
  and
  \begin{equation}
    c_k = u^{k'}_k(\sum c_{-i-j})/ (\sum u_{-i-j})\label{eq:proportionbytail},
  \end{equation}
  where $(k, k') = (i, j), (j, i)$.
  Combining equations~(\ref{eq:ineq}--\ref{eq:proportionbytail}),
  we obtain the defining inequalities for the badness of $(u^i, u^j)$.
\end{proof}

\begin{cor}\label{cor:AC}
  There exist an infinite set $W \subseteq \Delta$
  and a rule $z$ such that
  $z$ is strategy-free$^w$ for all $w \in W$.
\end{cor}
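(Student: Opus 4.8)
The plan is to split the statement into two independent tasks: (i) exhibit an infinite $W \subseteq \Delta$ no two distinct members of which form a bad pair, and (ii) show that \emph{any} such bad-pair-free $W$ admits a single rule that is strategy-free$^w$ for every $w \in W$. The engine for (ii) is Proposition~\ref{lem:composite}, which pins down exactly when the demands coming from two weights can interact, and the whole argument runs through the reformulation in the Remark after Definition~\ref{defi:sf}: being strategy-free$^w$ means $z_{-i}(E,c) = c_{-i}$ for every $((E,c),w) \in D$, with $i$ the witnessing index.

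For (ii) I would first check a consistency statement. A fixed $w$ cannot witness an over-demanded problem (one with $\sum c > E$) in $D$ through two distinct indices $i \neq j$: two witnesses would force $(c'_i,c_{-i}) \parallel w \parallel (c'_j,c_{-j})$ with $c'_i < c_i$ and $c'_j < c_j$, and since $w$ lies in the open simplex this yields $c_i = c'_i$, a contradiction. Next, if distinct $w,w' \in W$ both constrain the same over-demanded $(E,c)$, say via indices $i$ and $j$, then the second case in the proof of Proposition~\ref{lem:composite} shows that $i \neq j$ would make $(w,w')$ bad; bad-pair-freeness excludes this, so $i = j$ and the two weights impose the \emph{identical} constraint $z_{-i}(E,c) = c_{-i}$. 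Thus every over-demanded problem receives at most one index, and the collected strategy-freedom constraints over all of $W$ are consistent. I would then define $z$ by $z(E,c)=c$ when $\sum c \le E$; by $z_{-i}(E,c)=c_{-i}$ and $z_i(E,c)=E-\sum c_{-i}$ (with $i$ the common witnessing index) when $\sum c > E$ and $(E,c)$ is constrained by some $w \in W$; and by a fixed auxiliary rule such as $\rulename{CEA}^{w_0}$ for an arbitrary $w_0$ otherwise. The defining inequalities of $D$ together with $\sum c > E$ give $0 \le E - \sum c_{-i} \le c_i$, so (\ref{eq:not-exceeding-claims}) holds, while $\sum z = E$ gives (\ref{eq:balance}) and (\ref{eq:convenience}) is built in; by construction $z$ is strategy-free$^w$ for every $w \in W$.

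It remains to carry out (i). Writing $N = \{0,1,\dots,n-1\}$ with $n = |N| \ge 3$, I would take
\[
  W = \{\, w^{(t)} : t > 0 \,\}, \qquad w^{(t)} \parallel (t,1,1,\dots,1),
\]
the normalization of the vector carrying $t$ in coordinate $0$ and $1$ elsewhere; these are pairwise distinct interior points of $\Delta$, so $W$ is infinite. Since badness scales out of normalization, I may test the unnormalized representatives. For distinct $s \neq t$ and distinct $i,j$: if $0 \notin \{i,j\}$ and $n \ge 4$, then $w^{(s)}_{-i-j}$ and $w^{(t)}_{-i-j}$ carry the entries $s$ and $t$ against entries $1$, hence are parallel only when $s=t$, so the parallelism clause of badness fails; if $0 \notin \{i,j\}$ and $n = 3$, the two defining inequalities of badness become the contradictory strict inequalities $t < s$ and $t > s$; and if $0 \in \{i,j\}$, then $N \setminus \{i,j\}$ avoids coordinate $0$, the two tail-sums coincide, and one of the two defining inequalities collapses to the impossible $1 < 1$ or $1 > 1$. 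In every case badness fails, so $W$ is bad-pair-free, completing (i) and hence the proof.

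The main obstacle is the consistency step in (ii): one must guarantee that constraints inherited from different weights never clash on a genuinely over-demanded problem. This is exactly what Proposition~\ref{lem:composite} delivers, for the interaction of two weights is governed by $B \cup B'$, and discarding the bad part $B$ forces every shared constraint to arise from $B'$, where the witnessing index — and therefore the constraint itself — is common to both weights. By comparison, the verification that the explicit family in (i) is bad-pair-free is a short finite case check. (One could instead take for $W$ a maximal bad-pair-free subset of $\Delta$, obtained non-constructively; the explicit family above avoids any appeal to choice.)
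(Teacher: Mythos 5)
Your proof is correct, and it takes a genuinely different route from the paper's. Both arguments run on the same engine---the relation $D$ from the Remark after Definition~\ref{defi:sf} and Proposition~\ref{lem:composite}---but the paper invokes the Axiom of Choice to obtain a maximal $(B \cup B')$-independent $W$, asserts that each wellformed problem is $D$-related to exactly one member of $W$, and reads the rule off that unique weight. You exclude only \emph{bad} pairs and supply the observation the paper does not use: two weights that are merely $B'$-related must share their witnessing index on any common problem (different indices would make the pair bad, by the second case of the proof of Proposition~\ref{lem:composite}, and a single weight cannot witness two indices), so they impose the \emph{same} constraint $z_{-i}(E,c)=c_{-i}$ and can coexist in $W$. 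This is exactly what lets you replace the choice-theoretic step by the explicit family $w^{(t)} \parallel (t,1,\dots,1)$, whose members are pairwise $B'$-related but never bad. Your route buys three things. First, it is constructive, whereas the paper itself flags its answer as non-constructive. Second, it is more robust: for $|N| = 3$, writing $r_k = v_k/u_k$ for distinct weights $u,v$, one checks that $(u,v)$ or $(v,u)$ is bad exactly when $r_0, r_1, r_2$ are pairwise distinct, so every pair of distinct weights is either bad or $B'$-related; hence any $(B \cup B')$-independent set is a singleton, the paper's maximal $W$ is \emph{not} infinite in that case, and its construction does not go through---while your $W$ works for every $|N| \ge 3$. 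Third, your prescription $z_{-i}(E,c) = c_{-i}$, $z_i(E,c) = E - \sum c_{-i}$, with a fallback rule on unconstrained problems, demonstrably satisfies (\ref{eq:balance})--(\ref{eq:not-exceeding-claims}); the paper's award (the vector \emph{parallel} to $w(c)$ agreeing with $c$ off one coordinate) has sum $\sum \lambda w(c) \le E$, which can be strictly below $E$ on an over-demanded problem, so as literally stated it violates (\ref{eq:balance}), and your variant is the one that actually defines a rule. In short: same skeleton, but your decomposition---bad-pair-freeness plus consistency of the $B'$-induced constraints---is explicit and strictly more general than the paper's independence argument.
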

\begin{proof}
  By the Axiom of Choice, there exists a maximal $(B \cup B')$-independent
  $W \subseteq \Delta$.
  That is,
  for every distinct pair $\{w, w'\} \subseteq W$,
  exactly one of the weights is in $W$ if and only if they do not form a bad
  or $B'$-related pair.
  By construction, for each $(E, c) \in \problems^N$,
  there exists \emph{exactly one}
  $w \in W$, for which we write $w(c)$,
  such that $((E, c), w) \in D$.
  We may define a rule $z$ so
  for each $(E, c) \in \problems^N$,
  the award $z(E, c)$ is the unique vector that is  parallel to $w(c)$
  and is identical to $c$
  possibly except in one coordinate.
  It is evident that $z$ is strategy-free for all $w \in W$ by construction.
\end{proof}


\bibliographystyle{elsarticle-harv} 
\bibliography{econ}


\end{document}